\let\cl@part\undefined
\newif\ifautomatica
\newtheorem{lemma}{Lemma}
\newtheorem{assumption}{Assumption}
\newtheorem{definition}{Definition}
\newtheorem{corollary}{Corollary}
\newtheorem{theorem}{Theorem}
\newtheorem{remark}{Remark}
\newcommand{\revision}[1]{\textcolor{blue}{#1}}
\newcommand{\revision}[1]{\textcolor{black}{#1}}
\def\eqref#1{equation~\ref{#1}}
\def\1{\bm{1}}
\DeclareMathAlphabet{\mathsfit}{\encodingdefault}{\sfdefault}{m}{sl}
\SetMathAlphabet{\mathsfit}{bold}{\encodingdefault}{\sfdefault}{bx}{n}
\newcommand{\E}{\mathbb{E}}
\begin{document}

\begin{frontmatter}

\title{Stochastic Model Predictive Control for Sub-Gaussian Noise}

\thanks[footnoteinfo]{This work is in part supported by the Hasler Foundation ("Learn to learn safely" project, grant number: 21039), and Swiss National Science Foundation under NCCR Automation, grant agreement 51NF40 180545, and ETH AI Center.
}

\author[balgrist,eth_cs,eth_ai_center]{Yunke Ao}\ead{yunke.ao@balgrist.ch},    %
\author[idsc]{Johannes K{\"o}hler}\ead{jkoehle@ethz.ch},   
\author[idsc,eth_cs,eth_ai_center]{Manish Prajapat}\ead{manishp@ai.ethz.ch}, 
\author[eth_cs,eth_ai_center]{Yarden As}\ead{yardas@ethz.ch}, 
\author[idsc,eth_ai_center]{Melanie Zeilinger}\ead{mzeilinger@ethz.ch},
\author[balgrist,eth_ai_center]{Philipp F{\"u}rnstahl}\ead{Philipp.Fuernstahl@balgrist.ch},
\author[eth_cs,eth_ai_center]{Andreas Krause}\ead{krausea@ethz.ch}%

\address[balgrist]{ROCS, Balgrist University Hospital, University of Zurich, Switzerland}
\address[idsc]{Institute for Dynamic Systems and Control, ETH Zurich, Zurich, Switzerland}%
\address[eth_cs]{Department of Computer Science, ETH Zurich, Zurich, Switzerland}
\address[eth_ai_center]{ETH AI Center, ETH Zurich, Zurich, Switzerland}

\begin{keyword}                           %
Sub-Gaussian noise, Stochastic model predictive control, Probabilistic reachable sets, Optimal control synthesis for systems with uncertainty, Control of constrained systems, Output feedback control.                %
\end{keyword}                             %

\begin{abstract}

We propose a stochastic Model Predictive Control (MPC) framework that ensures closed-loop chance constraint satisfaction for linear systems with general \emph{sub-Gaussian} process and measurement noise.
By considering sub-Gaussian noise, we can provide guarantees for a large class of distributions, including time-varying distributions. 
Specifically, we first provide a new characterization of sub-Gaussian random vectors using \emph{matrix variance proxies}, which can more accurately represent the predicted state distribution. 
We then derive tail bounds under linear propagation for the new characterization, enabling tractable computation of probabilistic reachable sets of linear systems.
Lastly, we utilize these probabilistic reachable sets to formulate a stochastic MPC scheme that provides closed-loop guarantees for general sub-Gaussian noise.
We further demonstrate our approach in simulations, including a challenging task of surgical planning from image observations.

\end{abstract}

\end{frontmatter}

\section{Introduction}

Many real-world control systems operate in safety-critical environments.
As such, these systems must maintain safety at all times, \emph{even in light of stochasticity or model ambiguity}.
Model Predictive Control (MPC) is a widely adopted optimization-based control framework, particularly well-suited for addressing challenges related to constraint satisfaction~\cite{rawlings2017model,QIN2003733}. 
Robust and stochastic MPC techniques are commonly used to ensure constraint satisfaction in systems influenced by significant process and measurement noise.

Robust MPC approaches enforce satisfaction of safety guarantees under worst-case scenarios~\cite{mayne2006robust,langson2004robust,richards2005robust}, often leading to overly conservative uncertainty propagation~\cite{bemporad2007robust}.
In contrast, stochastic MPC approaches model noise as random variables with stronger distributional assumptions and enforce constraints with a user-chosen probability, thereby reducing conservatism~\cite{hewing2020recursively,muntwiler2023lqg,farina2016stochastic}.
This work seeks to balance the need for reduced conservatism with weaker assumptions on the underlying noise distribution, by generalizing the existing stochastic MPC methods to sub-Gaussian noise.

Stochastic MPC has been widely studied \cite{farina2016stochastic,lindemann2024formal,Prandini2012scenario,hewing2019scenario}, including theoretical results for closed-loop chance constraint satisfaction~\cite{hewing2020recursively,muntwiler2023lqg,kohler2024predictive}.
A common challenge in these frameworks is the computation of probabilistic reachable sets (PRS), i.e., sets containing future states with a high probability.
Methods proposed by Hewing et al. and Muntwiler et al. 
leverage Gaussian distribution of the noise to derive PRS in closed-form~\cite{hewing2020recursively,muntwiler2023lqg}.
However, the noise in real-world applications is often not Gaussian distributed.
\revision{Sampling-based techniques (conformal prediction or scenario approach) based on independent and identically distributed (i.i.d.) are leveraged in~\cite{lindemann2024formal,Prandini2012scenario,mammarella2022chance}}.
Nevertheless, sampling-based methods can be computationally expensive for long-horizon problems and the i.i.d. assumption may be too restrictive in many applications. 
The Gaussian noise assumption can be relaxed using distributional robustness (DR) approaches, which can provide guarantees for families of distributions  \cite{mark2021data,aolaritei2023wasserstein,li2023distributionally,li2024distributionally}.
For instance, simple computations of PRS can be derived for general distributions using only the covariance, though the resulting sets tend to be conservative~\cite{li2023distributionally,hewing2020recursively,farina2015approach}.
Aolaritei et al. recently incorporated samples and the Wasserstein distance to compute PRS, but the method still relies on i.i.d. noise assumptions~\cite{aolaritei2023wasserstein}.

Despite the advancements of existing works, limited work addresses closed-loop guarantees for 
MPC under \emph{non-Gaussian} and \revision{non-identical} noise \revision{distributions}. 
This challenge is particularly relevant for vision-based control, where states or intermediate observations are estimated from 
images and subsequently used to ensure safe control ~\cite{chou2022safe,li2015vision,drews2019vision}.
In such cases, the estimation error is in general non-Gaussian and \revision{heteroscedastic (non-identical due to correlation with the state)}, see \Cref{rmk:nonlinear_obs} later.

To address this challenge, we draw on the concept of \emph{light-tailed distributions}, widely used in machine learning and high-dimensional statistics~\cite{vershyninHighdimensionalProbabilityIntroduction2018,chowdhury2017kernelized}, as a suitable characterization of such noise distributions.
In particular, \emph{sub-Gaussian} distributions encompass a broad class of light-tailed distributions (e.g., Gaussian) and all bounded distributions (e.g., the uniform distribution)~\cite{vershyninHighdimensionalProbabilityIntroduction2018}.
Furthermore, note that sub-Gaussianity does not require that the noise is identically distributed.

\begin{figure}[t]
\centering
\includegraphics[width=0.38\textwidth]{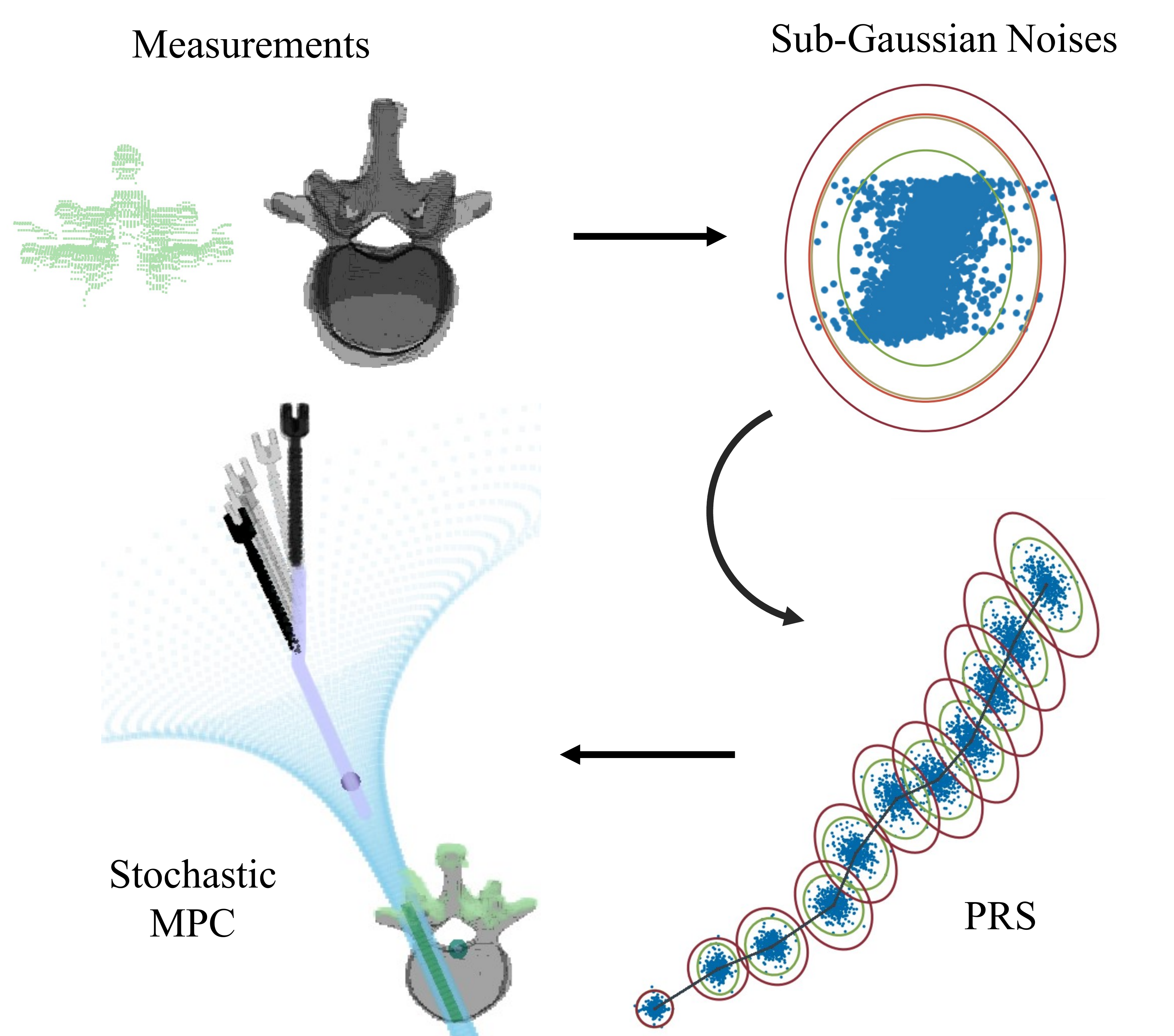}
\caption{
Overview of the proposed stochastic MPC for sub-Gaussian noise at the example of the surgical planning (\Cref{sec:env}). 
We first obtain (high-dimensional) measurements, \revision{estimate the vector state} and compute a sub-Gaussian characterization of the \revision{state estimation error}. %
Then, we provide a simple method to propagate uncertainty and compute probabilistic reachable sets (PRS, Section \labelcref{sec:prob_guar,sec:prop}).
The resulting probabilistic reachable sets of states are utilized in the stochastic MPC to provide probabilistic safety guarantees (Section \labelcref{sec:IF-SMPC-KF}).
}
\label{fig:teaser}
\end{figure}

In this work, we introduce a stochastic MPC framework for sub-Gaussian noise, see~\Cref{fig:teaser} for an overview of the proposed approach.
In particular, we extend the stochastic MPC framework~\cite{muntwiler2023lqg} from Gaussian noise to handle general sub-Gaussian noise.
We show that the resulting closed-loop system satisfies the chance constraints and provides a suitable bound on the asymptotic average performance.
These results are enabled through our technical contributions: 
\begin{enumerate}[(i)]
    \item New characterization of multivariate sub-Gaussian noise using \emph{matrix} variance proxies;
    \item Linear propagation rules for the proposed matrix variance proxies;
    \item Probabilistic reachable sets and moment bounds for the proposed sub-Gaussian characterization.
\end{enumerate} 
Through numerical simulations, we demonstrate the advantages of our approach over existing stochastic, robust, and DR methods.
\looseness=-1

\textit{Notation:}
Let $\|x\|_V$ denote $\sqrt{x^\top Vx}$ for $x\in \mathbb{R}^n$ and $V \in \mathbb{R}^{n\times n}$. 
Let $x_{0:t}$ be $x_0,x_1,x_2,...,x_t$.
\revision{$\|V\|_2$ denotes the matrix norm of $V\in \mathbb{R}^{n\times m} $ induced by the vector 2-norm.}
We use $I$ to represent the identity matrix.
Let $\lambda_{\mathrm{max}}(A)$ denote the maximum eigenvalue of the symmetric matrix $A$.
\revision{We use $\Omega$ to represent the sample space, i.e., the set of possible random noise realizations.}
We denote the expectation by $\E$. 
$\mathcal{N}(\mu, \Sigma)$ denotes the Gaussian distribution with mean $\mu$ and covariance matrix $\Sigma$.
We use $\mathbb{P}_x$ and $\mathbb{P}_{x|y}$ to denote the distribution of $x$ and $x$ given $y$ respectively, i.e. $x\sim \mathbb{P}_x$ and $x\sim \mathbb{P}_{x|y}|y$.
$\mathrm{Pr}\{E\}$ denotes the probability of an event $E$.
Let $\mathbb{N}$ denote the natural number set.
We denote the Minkowski sum by $\oplus$.
We use $\mathcal{K}_\infty$ to denote the set of continuous functions $\alpha: \mathbb{R}_{\geq0} \rightarrow \mathbb{R}_{\geq0}$ which are strictly increasing, unbounded, and satisfy $\alpha({0})=0$.

\section{Problem statement}

\label{sec:prob_set}
We consider the following linear time-invariant system:
\begin{subequations}
\label{eq:sys}
    \begin{align}
    x_{t+1} &=  Ax_t + Bu_t + w_t, \label{eq:dynamics}\\ 
    y_t &= Cx_t + \epsilon_t, \label{eq:obs}
\end{align}
\end{subequations}
where $t\in\mathbb{N}$ is the time step, $x_t\in \mathbb{R}^{n_x}$ is the state of the system, $y_t\in \mathbb{R}^{n_y}$ is the measurement, $u_t\in \mathbb{R}^{n_u}$ is the control input, $w_t \in \mathbb{R}^{n_x}$ are process noise and $\epsilon_t \in \mathbb{R}^{n_y}$ are measurement noise.
The pair $(A,B)$ is stabilizable and $(A,C)$ is detectable.
The states and inputs are subject to chance constraints:
\begin{align}
    &\mathrm{Pr}\{x_t\in\mathcal{X},u_t\in\mathcal{U}\} \geq 1 - \delta,\quad\forall t\in\mathbb{N}, \label{eq:cons}
\end{align}
where $\mathcal{X}$ and $\mathcal{U}$ are safety-critical state and input constraint sets, $1-\delta$ represents the user-specified satisfaction probability.

We consider sub-Gaussian noise distributions.
\begin{definition}[$\sigma-$sub-Gaussian \cite{vershyninHighdimensionalProbabilityIntroduction2018}]
\textit{
    A real-valued random variable \revision{$X:\Omega\rightarrow \mathbb{R}$} with \revision{finite} mean $\mu$ and variance proxy $\sigma$ is $\sigma$-sub-Gaussian, if, \revision{the moment generation function of $X$ exists, and} for all $s\in \mathbb{R}$, we have:
    \begin{align}
    \E\left[\exp{\left(s(X - \mu)\right)}\right] \leq \exp{\left(\frac{\sigma^2s^2}{2}\right)}.
        \label{eq:def_classic}
\end{align}
A real-valued random vector \revision{$X:\Omega\rightarrow \mathbb{R}^n$} is $\sigma$-sub-Gaussian, if the scalar $\lambda^\top X$ is $\sigma$-sub-Gaussian for all $\|\lambda\|=1$.
\label{def:classic}}
\end{definition}
\revision{The left-hand side of~\Cref{def:classic} is the moment generating function of the centered random variable}.
\revision{The right-hand side enforces a light-tailed distribution with (squared) exponential decay~\cite{vershyninHighdimensionalProbabilityIntroduction2018}}.
We denote $\mathbb{P}\in \mathcal{SG}(\mu, \sigma)$ that a distribution $\mathbb{P}$ is sub-Gaussian with mean $\mu$ and variance proxy $\sigma$.
$\mathcal{SG}(\mu, \sigma)$ can characterize a whole class of distributions, such as Gaussian, uniform, and all bounded distributions.
We assume that the initial state, measurement and process noise are (conditionally) sub-Gaussian with known variance proxies.
\begin{assumption}\label{assump:subGaussian}
\textit{For $x_0, w_{0:t}, \epsilon_{0:t}$ in system \labelcref{eq:sys}, we have:
\begin{subequations}
    \begin{align}
    &\mathbb{P}_{x_0}\in \mathcal{SG}(\mu_0, \sigma_0), \\
    &\mathbb{P}_{w_t|x_0,w_{0:t-1},\epsilon_{0:t-1}, u_{0:t-1}} \in \mathcal{SG}(0, \sigma_w),\quad \forall t\in\mathbb{N},\\
&\mathbb{P}_{\epsilon_t|x_0,w_{0:t-1},\epsilon_{0:t-1}, u_{0:t-1}} \in \mathcal{SG}(0, \sigma_\epsilon),\quad \forall t\in\mathbb{N},
\end{align} \label{eq:ass_1}
\end{subequations}
    where $\mu_0\in \mathbb{R}^{n_x},\sigma_0,\sigma_w,\sigma_\epsilon>0$ are known.
}
\end{assumption}
Note that \Cref{assump:subGaussian} does not restrict the distributions of $\epsilon_t$ and $w_t$
to be \emph{identical over time}, as commonly assumed in stochastic MPC literature.
\revision{However, the distributions are conditional zero mean, which can be naturally satisfied in case they are independent.}
Here $\sigma_\epsilon$ and $\sigma_w$ are %
\revision{maximum} sub-Gaussian variance proxies of measurement and process noise \revision{distributions, which may be non-identical over time.
In practice, $\sigma_\epsilon$ and $\sigma_w$ can be estimated from 
samples (cf.~\Cref{sec:env}).
} %

\begin{remark}
\label{rmk:nonlinear_obs}
    The consideration of general sub-Gaussian noise (Assumption~\ref{assump:subGaussian}) allows for %
    \revision{addressing} nonlinear observations from images or point clouds, see also the example in \Cref{sec:env}. 
    Specifically, suppose we have a non-linear observation $I_t = o(x_t) + \eta_t$ with i.i.d. noise $\eta_t$. 
    Typically, we use a model-based algorithm or an offline learned inverse mapping, e.g, through neural networks \cite{chou2022safe},  of the form
    \begin{align*}
        r(I_t) =r(o(x_t) + \eta_t) = Cx_t + \underbrace{r(o(x_t) + \eta_t) - Cx_t }_{=:\epsilon(x_t, \eta_t)}.
    \end{align*} 
    \revision{This yields a linear observation model with noise $\epsilon(x_t, \eta_t)$, which has the same form as~(\ref{eq:obs}).}
    \revision{The resulting noise distribution $\mathbb{P}_{\epsilon(x_t, \eta_t)|x_t}$ is non-identical over different $x_t$}.
    However, if $\epsilon(x_t, \eta_t)$ is bounded for all $x_t$ and zero-mean \revision{(or the means are known and subtracted from the system)}, a common sub-Gaussian variance proxy $\sigma_\epsilon$ exists, such that $\mathbb{P}_{\epsilon(x_t, \eta_t)|x_t}\in \mathcal{SG}(0, \sigma_\epsilon)$ for all $x_t$, i.e., Assumption~\ref{assump:subGaussian} holds.  
\end{remark}

Overall, we consider the following stochastic optimal control problem:
\begin{subequations} \label{eq:soc_problem}
\begin{align}
    & \inf_{\pi_{0:\infty}} \revision{\limsup_{T\rightarrow\infty}\frac{1}{T}}\sum_{t=0}^{\revision{T}}\ell(x_t, u_t) \label{eq:cost}\\
    \text{s.t. } &  u_t = \pi_t(y_{0:t}, u_{0:t-1}),
    ~\labelcref{eq:sys},~\labelcref{eq:cons},~\labelcref{eq:ass_1},~\forall t\in \mathbb{N},
\end{align}
\end{subequations}
where $\ell$ is the stage cost and $\pi_t$ are dynamic output-feedbacks. 
In this paper, we present a tractable approach to solving Problem~\eqref{eq:soc_problem}.

\section{Method}
\label{sec:thr}
In what follows, we develop our theory and analysis for solving Problem \labelcref{eq:soc_problem}.
We first provide a new definition of sub-Gaussian random variables using a \emph{matrix variance proxy}.
Further, in \Cref{sec:prop},  we introduce linear propagation rules using such a matrix variance proxy.
We then derive confidence bounds and moment bounds of the proposed new sub-Gaussian characterization in \Cref{sec:prob_guar}.
These results are finally 
utilized to extend the state-of-the-art stochastic output-feedback MPC framework for Gaussian noise \cite{muntwiler2023lqg} to solve the Problem \labelcref{eq:soc_problem} (\Cref{sec:IF-SMPC-KF}).

\subsection{Sub-Gaussian with matrix variance proxy}
\label{sec:sub_gau_def}

\Cref{def:classic} characterizes sub-Gaussian random vectors with a scalar variance proxy $\sigma$.
However, in linear systems \labelcref{eq:sys}, stochastic variances of states often develop correlations or scale differences across dimensions as they propagate through the dynamics. 
Consequently, relying on scalar variance proxies tends to overestimate uncertainty for state dimensions with smaller variance. 
To address this, we introduce a definition of sub-Gaussian random vectors using a matrix variance proxy.
\begin{definition}[Sub-Gaussian with matrix (co-)variance proxy]\label{def:sub_gau_mean}
    A real-valued random vector $\revision{X:\Omega\rightarrow\mathbb{R}^n}$ with \revision{finite} mean $\E[X] = \mu$ is called sub-Gaussian with a variance proxy $\Sigma\succeq 0$, i.e., $X\sim \mathcal{SG}(\mu,\Sigma)$, if, \revision{the moment generation function of $X$ exists and}$~\forall~\lambda \in \mathbb{R}^n,$
    \begin{align}
    \E\left[\exp{\left(\lambda^\top(X - \mu)\right)}\right] \leq \exp{\left(\frac{\|\lambda\|^2_\Sigma}{2}\right)}.\label{eq:sub_gau_def}
    \end{align}
\end{definition}
\revision{Similar to \Cref{def:classic}, the left-hand side is the moment generating function of the centered vector-valued random variable.
\Cref{def:sub_gau_mean} characterizes high-dimensional light-tailed distributions whose decay rates can vary across dimensions, characterized by the matrix~$\Sigma$.}
Next, we show that \Cref{def:sub_gau_mean} generalizes the standard definition, i.e., \Cref{def:classic} is a special case of \Cref{def:sub_gau_mean}.
\begin{lemma}
\textit{
    Every $\sigma$-sub-Gaussian random vector satisfying \Cref{def:classic} also has a finite matrix variance proxy $\Sigma=\sigma^2 I$ with \Cref{def:sub_gau_mean}, and vice versa, i.e., every sub-Gaussian random vector having a matrix variance proxy $\Sigma \succ 0$ with \Cref{def:sub_gau_mean} is \revision{$\sigma=\sqrt{\|\Sigma\|_2}$}-sub-Gaussian with \Cref{def:classic}.}
    \label{lemma:equ}
\end{lemma}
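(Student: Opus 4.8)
The plan is to prove the two implications in \Cref{lemma:equ} separately, each by reducing to the scalar inequality in \Cref{def:classic} via the directional characterization of multivariate sub-Gaussianity. For the forward direction, suppose $X$ is $\sigma$-sub-Gaussian in the sense of \Cref{def:classic}, so that $\lambda^\top X$ is $\sigma$-sub-Gaussian for every unit vector $\lambda$. Fix an arbitrary $\lambda \in \mathbb{R}^n$; if $\lambda = 0$ both sides of~\eqref{eq:sub_gau_def} equal $1$, so assume $\lambda \neq 0$ and write $\lambda = \|\lambda\| \hat\lambda$ with $\|\hat\lambda\| = 1$. Then $\lambda^\top(X-\mu) = \|\lambda\|\, \hat\lambda^\top(X-\mu)$, and applying~\eqref{eq:def_classic} to the scalar $\hat\lambda^\top X$ with the scalar parameter $s = \|\lambda\|$ gives $\E[\exp(\lambda^\top(X-\mu))] \le \exp(\sigma^2 \|\lambda\|^2 / 2)$. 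Since $\|\lambda\|^2_{\sigma^2 I} = \sigma^2 \|\lambda\|^2$, this is exactly~\eqref{eq:sub_gau_def} with $\Sigma = \sigma^2 I$, and existence of the MGF transfers directly. So $X \sim \mathcal{SG}(\mu, \sigma^2 I)$.

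For the reverse direction, suppose $X \sim \mathcal{SG}(\mu, \Sigma)$ with $\Sigma \succ 0$ in the sense of \Cref{def:sub_gau_mean}, and set $\sigma := \sqrt{\|\Sigma\|_2}$. Take any unit vector $\lambda$ and any $s \in \mathbb{R}$; we must bound $\E[\exp(s \lambda^\top(X-\mu))]$. Apply~\eqref{eq:sub_gau_def} with the vector $s\lambda$ in place of $\lambda$: the left-hand side is precisely $\E[\exp(s\lambda^\top(X-\mu))]$, and the right-hand side is $\exp(\|s\lambda\|^2_\Sigma / 2) = \exp(s^2 \lambda^\top \Sigma \lambda / 2)$. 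Now bound $\lambda^\top \Sigma \lambda \le \lambda_{\mathrm{max}}(\Sigma)\|\lambda\|^2 = \|\Sigma\|_2$ (the last equality because $\Sigma$ is symmetric positive semidefinite, so its induced $2$-norm equals its largest eigenvalue), which with $\|\lambda\| = 1$ gives $\E[\exp(s\lambda^\top(X-\mu))] \le \exp(\sigma^2 s^2 / 2)$. Hence $\lambda^\top X$ is $\sigma$-sub-Gaussian for every unit $\lambda$, i.e., $X$ is $\sigma$-sub-Gaussian per \Cref{def:classic}. One should also check that the mean $\mu$ is the correct centering in both directions — but this is immediate since both definitions use $\E[X] = \mu$.

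This argument is essentially a change of variables in the MGF bound, so I do not anticipate a genuine obstacle; the only point requiring a little care is the identity $\|\Sigma\|_2 = \lambda_{\mathrm{max}}(\Sigma)$ for symmetric $\Sigma \succeq 0$, which I would invoke from standard linear algebra (or derive in one line from the spectral theorem), and the bookkeeping that the hypothesis $\Sigma \succ 0$ is used only to ensure $\sigma > 0$ so that the conclusion matches the ``$\sigma > 0$'' convention implicit in \Cref{def:classic}; the inequality itself holds for any $\Sigma \succeq 0$. I would close by noting the two implications together establish the claimed equivalence.
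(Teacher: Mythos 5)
Your proposal is correct and follows essentially the same route as the paper's proof: the forward direction rescales an arbitrary $\lambda$ to a unit vector and applies the scalar bound with $s=\|\lambda\|$, and the reverse direction substitutes $s\lambda$ into the matrix-proxy bound and controls $\lambda^\top\Sigma\lambda$ by $\|\Sigma\|_2$ for unit $\lambda$. If anything, your write-up is slightly more careful than the paper's (handling $\lambda=0$ and making the $\lambda^\top\Sigma\lambda\leq\lambda_{\mathrm{max}}(\Sigma)=\|\Sigma\|_2$ step explicit), so no changes are needed.
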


The proof of this lemma is detailed in \Cref{sec:proof_equ}.
Consequently, as \emph{all distributions with bounded support} are sub-Gaussian noises under \Cref{def:classic} \cite{vershyninHighdimensionalProbabilityIntroduction2018}, they are also sub-Gaussian with a matrix variance proxy.
We note that the multivariate sub-Gaussian stable distribution \cite{nolan2013multivariate,swihart2022multivariate} also uses positive definite matrices to characterize light-tailed distributions. 
However, this characterization can only capture elliptically contoured distributions \cite{cambanis1981theory}, i.e.,  distributions whose probability mass contours are elliptically shaped, while \Cref{def:sub_gau_mean} has no such limitations.
Moreover, contrary to \Cref{def:sub_gau_mean}, this characterization does not contain the scalar sub-Gaussian definition as a special case.

\subsection{Uncertainty propagation with linear systems}
\label{sec:prop}

In System \labelcref{eq:sys}, states are propagated under linear transformation and addition.
Here we show that sub-Gaussian distributions are closed under these operations and the resulting propagation of matrix variance proxy is also straightforward.

\begin{theorem}[Propagation of matrix variance proxy]\label{thr:propagation} 
\textit{
    Consider $X\sim \mathcal{SG}(\mu,\Sigma)$ (\Cref{def:sub_gau_mean}) with $\mu\in\mathbb{R}^n$ and $\Sigma\succeq 0\in\mathbb{R}^{n\times n}$.
    \begin{enumerate}[a.]
        \item For any matrix $A\in \mathbb{R}^{m\times n}$, $AX\sim \mathcal{SG}(A\mu, A\Sigma A^\top)$.
        \item If $~\mathbb{P}_{Y|X} \in \mathcal{SG}(\mu',\Sigma')$, then \\$\mathbb{P}_{X + Y}\in\mathcal{SG}(\mu + \mu',\Sigma + \Sigma')$.
    \end{enumerate}
    }
\end{theorem}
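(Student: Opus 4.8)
The plan is to verify both claims directly from the moment generating function (MGF) inequality in \Cref{def:sub_gau_mean}, exploiting linearity inside the exponent and the conditional structure for the sum.

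For part (a), fix an arbitrary $\lambda \in \mathbb{R}^m$. The key identity is $\lambda^\top(AX - A\mu) = (A^\top\lambda)^\top(X-\mu)$, so that $\E[\exp(\lambda^\top(AX-A\mu))] = \E[\exp((A^\top\lambda)^\top(X-\mu))]$. Applying \Cref{def:sub_gau_mean} to $X$ with the test vector $A^\top\lambda$ bounds this by $\exp(\tfrac{1}{2}\|A^\top\lambda\|_\Sigma^2)$. It then suffices to observe that $\|A^\top\lambda\|_\Sigma^2 = (A^\top\lambda)^\top \Sigma (A^\top\lambda) = \lambda^\top (A\Sigma A^\top)\lambda = \|\lambda\|_{A\Sigma A^\top}^2$, which is exactly the required bound; one also notes $A\Sigma A^\top \succeq 0$ and that $\E[AX] = A\mu$ is finite, so $AX \sim \mathcal{SG}(A\mu, A\Sigma A^\top)$. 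Existence of the MGF of $AX$ follows since it is a restriction of the MGF of $X$ to the subspace $\{A^\top\lambda\}$.

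For part (b), fix $\lambda \in \mathbb{R}^n$ and condition on $X$. Write
\begin{align*}
\E\bigl[\exp(\lambda^\top(X+Y - \mu - \mu'))\bigr]
= \E\Bigl[\exp(\lambda^\top(X-\mu))\,\E\bigl[\exp(\lambda^\top(Y-\mu')) \,\big|\, X\bigr]\Bigr].
\end{align*}
By the hypothesis $\mathbb{P}_{Y|X} \in \mathcal{SG}(\mu',\Sigma')$, the inner conditional expectation is bounded (almost surely in $X$) by $\exp(\tfrac{1}{2}\|\lambda\|_{\Sigma'}^2)$, a deterministic constant. Pulling this constant out and then applying \Cref{def:sub_gau_mean} to $X$ gives the bound $\exp(\tfrac{1}{2}\|\lambda\|_{\Sigma'}^2)\cdot\exp(\tfrac{1}{2}\|\lambda\|_{\Sigma}^2) = \exp(\tfrac{1}{2}\|\lambda\|_{\Sigma+\Sigma'}^2)$, using $\|\lambda\|_\Sigma^2 + \|\lambda\|_{\Sigma'}^2 = \lambda^\top(\Sigma+\Sigma')\lambda$. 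Since $\Sigma + \Sigma' \succeq 0$ and $\E[X+Y] = \mu + \mu'$ is finite by the tower property, this establishes $\mathbb{P}_{X+Y} \in \mathcal{SG}(\mu+\mu', \Sigma+\Sigma')$.

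The main subtlety — more a bookkeeping point than a genuine obstacle — is the measure-theoretic justification in part (b): one must argue that the conditional expectation $\E[\exp(\lambda^\top(Y-\mu'))\mid X]$ really is bounded by $\exp(\tfrac12\|\lambda\|_{\Sigma'}^2)$ for ($\mathbb{P}_X$-almost) every realization of $X$, which is precisely the content of the conditional sub-Gaussian assumption $\mathbb{P}_{Y|X}\in\mathcal{SG}(\mu',\Sigma')$ interpreted as "the bound holds for the conditional law given $X=x$, uniformly in $x$". One should also confirm the MGFs exist (finite) so that the tower property / Fubini manipulation is legitimate; this is inherited from the assumed existence of the MGFs of $X$ and of $Y\mid X$, together with the uniform bound just described. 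Everything else is the one-line algebraic identity $\|\lambda\|_{\Sigma}^2 + \|\lambda\|_{\Sigma'}^2 = \|\lambda\|_{\Sigma+\Sigma'}^2$ and its analogue $\|A^\top\lambda\|_\Sigma^2 = \|\lambda\|_{A\Sigma A^\top}^2$.
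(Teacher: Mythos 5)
Your proof is correct and follows essentially the same route as the paper's: part (a) is the identity $\|A^\top\lambda\|_\Sigma^2=\|\lambda\|_{A\Sigma A^\top}^2$ applied to the MGF bound, and part (b) is the tower property, bounding the conditional MGF of $Y$ given $X$ by the deterministic constant $\exp(\tfrac12\|\lambda\|_{\Sigma'}^2)$ and then applying the sub-Gaussian bound for $X$ (the paper merely carries the constants inside the exponent rather than pulling them out). Your explicit attention to the almost-sure interpretation of the conditional bound is a welcome clarification the paper leaves implicit, but it does not change the argument.
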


\begin{proof}

From Definition~\ref{def:sub_gau_mean}, we have for $a$:
     \begin{align*}
        &\E\left[\exp\left({\lambda^\top A(X - \mu)}\right)\right] \leq \exp{\left(\frac{\|A^\top\lambda\|_\Sigma^2}{2}\right)} \\
        = &\exp{\left(\frac{\|\lambda\|_{A\Sigma A^\top}^2}{2}\right)}.
    \end{align*}
    For $b$, it can be shown by:
    \begin{align*}
        &~~~~ \E\left[\exp{\left[\lambda^\top((X  - \mu) + (Y - \mu')) -\revision{\dfrac{\|\lambda\|^2_\Sigma + \|\lambda\|^2_{\Sigma'}}{2}}\right]}\right]  \\
        &= \E_X\left[\exp{\left(\lambda^\top(X - \mu)-\dfrac{\|\lambda\|^2_{\Sigma}}{2}\right)}\right. \\
        & ~~~\left.\E_{Y|X}\left[\exp{\left(\lambda^\top(Y - \mu')- \dfrac{\|\lambda\|^2_{\Sigma'}}{2}\right)}\right]\right] \\
        &\stackrel{\revision{\labelcref{eq:sub_gau_def}}}{\leq} 
        \begin{aligned}[t]\E_X\left[\exp\left({\lambda^\top(X - \mu)-\dfrac{\|\lambda\|^2_{\Sigma}}{2}}\right) \cdot 1\right] 
        \stackrel{\revision{\labelcref{eq:sub_gau_def}}}{\leq} 1.\qedhere
        \end{aligned}
    \end{align*}
\end{proof}

Theorem~\ref{thr:propagation} indicates that the propagation rule of matrix variance proxy is similar to the propagation of covariance matrices, enabling simple uncertainty propagation with linear systems.
Propagation of sub-Gaussian noise under linear dynamics has also been studied for system identification~\cite{sarker2023accurate}, 
however, using a scalar variance proxy and without derivations of probabilistic reachable sets.

\subsection{Confidence and moment bounds}
\label{sec:prob_guar}
In stochastic MPC, one key step for guaranteeing safety is computing probabilistic reachable sets (PRS), 
i.e., establishing confidence bounds $\mathcal{E}_t^x$ for the state distributions with $\mathrm{Pr}\{x_t\in \mathcal{E}_t^x\} \geq 1 - \delta$.
With \Cref{thr:propagation}, we can predict matrix variance proxies of state distributions in system \labelcref{eq:sys}.
To compute PRS, we additionally need to derive confidence bounds using these obtained matrix variance proxies.
  Next, we present two confidence bounds for sub-Gaussian distributions.
\begin{lemma}[Half-space bound]\label{lemma:half_space}
    \textit{
    If $X\sim\mathcal{SG}(\mu,\Sigma)$, then for any $h\in \mathbb{R}^n$, 
    $\mathrm{Pr}\{ X\in \mathcal{E}^h\}\geq 1-\delta$
    with the half-space confidence bound:
    \begin{align*}
        \mathcal{E}^{\mathrm{h}}(\mu, \Sigma, \delta, h):=\left\{X\,\,|\,\,h^\top(X - \mu) \leq \|h\|_{\Sigma} \sqrt{2\ln\frac{1}{\delta}}\right\}.
    \end{align*}}
\end{lemma}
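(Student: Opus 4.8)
The plan is to prove the half-space bound by a standard Chernoff (exponential Markov) argument, specialized to the matrix variance proxy characterization of \Cref{def:sub_gau_mean}. Fix $h\in\mathbb{R}^n$. If $\|h\|_\Sigma = 0$, then taking $\lambda = sh$ in \labelcref{eq:sub_gau_def} gives $\E[\exp(s\,h^\top(X-\mu))]\le 1$ for all $s\in\mathbb{R}$, which forces $h^\top(X-\mu)=0$ almost surely; in that case $\mathcal{E}^{\mathrm h}$ is the half-space $\{h^\top(X-\mu)\le 0\}$, which $X$ satisfies with probability one, and the claim is trivial. So I would assume $\|h\|_\Sigma>0$, set $c:=\|h\|_\Sigma\sqrt{2\ln(1/\delta)}$, and reduce the statement to showing $\mathrm{Pr}\{h^\top(X-\mu)>c\}\le\delta$ (the case $\delta=1$ being vacuous).

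First I would apply Markov's inequality to the nonnegative random variable $\exp\!\left(s\,h^\top(X-\mu)\right)$ for an arbitrary $s>0$, obtaining
\[
\mathrm{Pr}\{h^\top(X-\mu)>c\}\;\le\; e^{-sc}\,\E\!\left[\exp\!\left(s\,h^\top(X-\mu)\right)\right].
\]
Next I would invoke \Cref{def:sub_gau_mean} with the particular choice $\lambda = sh$, which yields $\E[\exp(s\,h^\top(X-\mu))]\le \exp\!\big(\tfrac12 s^2\|h\|_\Sigma^2\big)$. Combining the two bounds gives $\mathrm{Pr}\{h^\top(X-\mu)>c\}\le \exp\!\big(-sc+\tfrac12 s^2\|h\|_\Sigma^2\big)$ for every $s>0$.

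The final step is to optimize over $s$: the exponent is a convex quadratic in $s$, minimized at $s^\star = c/\|h\|_\Sigma^2>0$, where it equals $-c^2/(2\|h\|_\Sigma^2)$. Substituting $c=\|h\|_\Sigma\sqrt{2\ln(1/\delta)}$ makes this exponent exactly $\ln\delta$, hence $\mathrm{Pr}\{h^\top(X-\mu)>c\}\le\delta$. Taking complements yields $\mathrm{Pr}\{X\in\mathcal{E}^{\mathrm h}(\mu,\Sigma,\delta,h)\}\ge 1-\delta$, as claimed.

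I do not expect a genuine obstacle here — this is a textbook Chernoff argument, and the only points needing care are the bookkeeping around the degenerate direction $\|h\|_\Sigma=0$ and checking that the optimizing $s^\star$ is indeed positive. I would also note in passing that applying the same computation to both $h$ and $-h$ immediately gives a symmetric two-sided bound, which is what will be needed when intersecting half-spaces to build polytopic probabilistic reachable sets in \Cref{sec:prob_guar} and \Cref{sec:IF-SMPC-KF}.
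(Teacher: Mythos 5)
Your proof is correct and follows essentially the same route as the paper's: Markov's inequality applied to $\exp(s\,h^\top(X-\mu))$, the sub-Gaussian bound with $\lambda = sh$, and optimization at $s^\star = \tau/\|h\|_\Sigma^2$. The only difference is your explicit treatment of the degenerate case $\|h\|_\Sigma=0$, which the paper leaves implicit.
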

\begin{proof}
By Chernoff inequality, for any $s>0$ and $\tau>0$,
\begin{align*}
    &\mathrm{Pr}\left\{h^\top(X - \mu) \geq \tau\right\} \\
    = &\,\mathrm{Pr}\left\{\exp\left({sh^\top(X - \mu)}\right) \geq \exp\left({s\tau}\right)\right\} \\
    \leq &\,\E\left[\exp\left({sh^\top(X - \mu)-s\tau}\right)\right]\stackrel{\revision{\labelcref{eq:sub_gau_def}}}{\leq} \exp\left({\frac{s^2\|h\|^2_{\Sigma}}{2} - s\tau}\right).
\end{align*}
Assigning $s= \dfrac{\tau}{\|h\|^2_{\Sigma}}$ gives:
\begin{align*}
    \mathrm{Pr}\{h^\top(X - \mu) \geq \tau\} \leq \exp\left(-\frac{\tau^2}{2\|h\|^2_{\Sigma}}\right).
\end{align*}
Then solving $\tau$ from $\exp\left(-\frac{\tau^2}{2\|h\|^2_{\Sigma}}\right)=\delta$ yields the confidence bound for $1-\delta$.
\end{proof}
Note that this bound recovers the known confidence bound for scalar variance proxy \cite{vershyninHighdimensionalProbabilityIntroduction2018} as a special case.

Given Lemma~\labelcref{lemma:half_space}, we could also construct polytope confidence sets as an intersection of individual half-space constraints using Boole's inequality~\cite{paulson2020stochastic}. 
To leverage the correlation between different dimensions, we also introduce elliptical confidence bounds using the variance proxy $\Sigma$ more directly:
\begin{theorem}[Elliptical bound]\label{thr:elliptical}
\textit{
Consider $X\sim \mathcal{SG}(\mu,\Sigma)$ with $\mu\in\mathbb{R}^n$ and $\Sigma\succ 0\in\mathbb{R}^{n\times n}$, then we have for all $\tau>\sqrt{n}$:
    \begin{align}
        \mathrm{Pr}\{\|X - \mu\|_{\Sigma^{-1}} \geq \tau\} \leq \left(\frac{e}{n}\right)^{\frac{n}{2}}\tau^n \exp{\left(-\frac{\tau^2}{2}\right)} \label{ineq:tail}
    \end{align}
     Moreover, $\mathrm{Pr}\{ X\in \mathcal{E}^e\}\geq 1-\delta$ with the elliptical confidence bound:
    \begin{align}
    \mathcal{E}^{\mathrm{e}}(\mu, \Sigma, \delta, n):=\left \{X \,|\, \|X - \mu\|^2_{\Sigma^{-1}} \leq n + ng^{-1}(\delta^{-\frac{2}{n}})\right\},\label{eq:e_set}
    \end{align}
    where $g\in \mathcal{K}_\infty,\,g(x)=\dfrac{\exp{x}}{1+x}$. }
\end{theorem}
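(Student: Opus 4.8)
The plan is to reduce the elliptical tail bound to a one-dimensional moment generating function estimate along a well-chosen direction, and then optimize. First I would whiten the vector: set $Z = \Sigma^{-1/2}(X-\mu)$, so that $\|X-\mu\|_{\Sigma^{-1}} = \|Z\|$ and, by part~(a) of \Cref{thr:propagation} (applied with $A = \Sigma^{-1/2}$), $Z \sim \mathcal{SG}(0, I)$ in the sense of \Cref{def:sub_gau_mean}. Thus it suffices to prove $\mathrm{Pr}\{\|Z\| \geq \tau\} \leq (e/n)^{n/2}\tau^n \exp(-\tau^2/2)$ for $\tau > \sqrt{n}$, where $Z$ is an $\sR^n$-valued random vector with $\E[\exp(\lambda^\top Z)] \leq \exp(\|\lambda\|^2/2)$ for all $\lambda$.

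The key step is a Chernoff/Laplace-transform bound on $\|Z\|^2$. The standard trick is to integrate the scalar sub-Gaussian estimate over directions: for $s>0$,
\begin{align*}
\E\bigl[\exp(s\|Z\|^2/2)\bigr] &= \E\Bigl[\bigl(2\pi\bigr)^{-n/2}\!\!\int_{\sR^n}\!\! \exp\bigl(\sqrt{s}\,g^\top Z - \|g\|^2/2\bigr)\,dg\Bigr] \\
&= (2\pi)^{-n/2}\!\!\int_{\sR^n}\!\! \E\bigl[\exp(\sqrt{s}\,g^\top Z)\bigr] \exp(-\|g\|^2/2)\,dg \\
&\leq (2\pi)^{-n/2}\!\!\int_{\sR^n}\!\! \exp\bigl((s-1)\|g\|^2/2\bigr)\,dg = (1-s)^{-n/2},
\end{align*}
valid for $0<s<1$ (Fubini is justified since the integrand is nonnegative and the bound is finite). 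Then by Markov's inequality, for any $0<s<1$,
\begin{align*}
\mathrm{Pr}\{\|Z\| \geq \tau\} = \mathrm{Pr}\bigl\{e^{s\|Z\|^2/2} \geq e^{s\tau^2/2}\bigr\} \leq (1-s)^{-n/2}\exp(-s\tau^2/2).
\end{align*}
Optimizing over $s\in(0,1)$: the minimizer is $s^\star = 1 - n/\tau^2$, which lies in $(0,1)$ exactly when $\tau > \sqrt{n}$ — this is the source of the hypothesis. Substituting $s^\star$ gives $(1-s^\star)^{-n/2} = (\tau^2/n)^{n/2}$ and $\exp(-s^\star\tau^2/2) = \exp(-\tau^2/2)\exp(n/2)$, so the product is $(\tau^2/n)^{n/2}e^{n/2}e^{-\tau^2/2} = (e/n)^{n/2}\tau^n e^{-\tau^2/2}$, which is exactly \labelcref{ineq:tail}.

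For the confidence set, I would set $\tau^2 = n + n r$ with $r \geq 0$ to be determined, so $\tau > \sqrt{n}$ whenever $r > 0$. Plugging into the tail bound and taking $n/2$-th roots, the bound $\mathrm{Pr}\{\|X-\mu\|_{\Sigma^{-1}} \geq \tau\} \leq \delta$ is implied by $\frac{e}{n}\tau^2 \exp(-\tau^2/n) \leq \delta^{2/n}$, i.e. $e(1+r)e^{-(1+r)} \leq \delta^{2/n}$, i.e. $\frac{e^{1+r}}{1+r} \geq \delta^{-2/n}$, i.e. $e\cdot g(r) \geq \delta^{-2/n}$ where $g(x) = e^x/(1+x)$. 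Wait — here I need to check the normalization of $g$ against the claim: the theorem states $g(x) = e^x/(1+x)$ and uses $g^{-1}(\delta^{-2/n})$, so I should track constants carefully and confirm $g$ indeed lies in $\mathcal{K}_\infty$ on $[0,\infty)$ (it does: $g(0)=1$... but $\mathcal{K}_\infty$ requires $g(0)=0$, so the intended $g$ is presumably shifted; I would verify that $g(x) = e^x/(1+x)$ restricted appropriately, or $g(x)=e^{x}/(1+x)-1$, is strictly increasing and unbounded on $\sR_{\geq 0}$, so that $g^{-1}$ is well-defined on the relevant range). Taking $r = g^{-1}(\delta^{-2/n})$ then yields $\tau^2 = n + n g^{-1}(\delta^{-2/n})$, matching \labelcref{eq:e_set}, and monotonicity of the tail bound in $\tau$ for $\tau>\sqrt n$ finishes the argument.

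The main obstacle is not the MGF computation — that is a standard Gaussian-chaos-style estimate — but rather the bookkeeping in the last step: making sure the constants in the definition of $g$ are consistent with the $\mathcal{K}_\infty$ requirement (so that $g^{-1}$ is legitimately defined and the set is nonempty for all $\delta\in(0,1)$), and checking that the substitution $\tau^2 = n(1+r)$ correctly inverts the tail bound including the edge behavior as $\delta \to 1$ (where $r\to 0$ and $\tau\to\sqrt n$, the boundary of validity). I would handle this by writing the condition as $g(r) \geq \delta^{-2/n}$ with the precise $g$ used in the statement and invoking its strict monotonicity and unboundedness on $\sR_{\geq 0}$.
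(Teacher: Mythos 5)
Your proof is correct and is essentially the paper's argument in whitened coordinates: integrating the MGF bound against a Gaussian density to control $\E[\exp(s\|Z\|^2/2)]$, applying Chernoff, and optimizing over $s=1/(1+m)$ reproduces the paper's bound $((1+m)/m)^{n/2}\exp(-\tau^2/(2+2m))$ with optimizer $m^*=n/(\tau^2-n)$ exactly, and your substitution $\tau^2=n(1+r)$ matches the paper's derivation of \labelcref{eq:e_set} (the stray factor of $e$ in your intermediate line $e^{1+r}/(1+r)\geq\delta^{-2/n}$ cancels correctly to $g(r)\geq\delta^{-2/n}$, as you note at the end). Your observation that $g(0)=1$ so $g$ is not literally a $\mathcal{K}_\infty$ function is a fair catch about the statement, but $g$ is strictly increasing and unbounded on $\mathbb{R}_{\geq0}$ with range $[1,\infty)$, so $g^{-1}(\delta^{-2/n})$ is well-defined and positive for all $\delta\in(0,1)$ and the result stands.
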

The proof of this theorem is detailed in \Cref{sec:proof_tail}. 
Moreover, \Cref{thr:elliptical} can also give a cylindrical set with bounds only in a subspace as 
\begin{align}
\mathcal{E}^e=\underbrace{H^{\dagger}\mathcal{E}^e(H\mu, H\Sigma H^\top,\delta,n_c)\oplus \mathrm{Null}(H)}_{=:\mathcal{E}^e_H(H,\mu,\Sigma,\delta,n_c)}, \label{eq:set_subspace}
\end{align} 
where $H\in \mathbb{R}^{n_c\times n},n_c<n$, $H^\dagger$ denotes the pseudo-inverse of $H$, and $\mathrm{Null}(H)$ represents the null space $\{x\in \mathbb{R}^n|Hx=0\}$.
Clearly, this set only has an elliptical boundary in the subspace $\mathrm{span}(H)$ and unrestricted in $\mathrm{Null}(H)$. 

Similar to the Gaussian case, our sub-Gaussian confidence bound grows logarithmically w.r.t. $\delta^{-1}$:
\begin{corollary}\label{lemma:complexity}
    For all $\delta \in (0,1)$, $n\geq 1$, \revision{$x\in \mathcal{E}^{\mathrm{e}}(\mu, \Sigma, \delta, n)$} with $\mathcal{E}^e$ in \Cref{thr:elliptical}, \revision{it holds that} 
    \begin{align*}
        \revision{\|x-\mu\|_{\Sigma^{-1}}^2\leq (1 + \ln4)n + 4\ln \delta^{-1}.}
\end{align*}
\end{corollary}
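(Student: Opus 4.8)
The plan is to bound the term $g^{-1}(\delta^{-2/n})$ appearing in the definition of $\mathcal{E}^{\mathrm{e}}$, since for $x \in \mathcal{E}^{\mathrm{e}}(\mu,\Sigma,\delta,n)$ we have $\|x-\mu\|_{\Sigma^{-1}}^2 \le n + n g^{-1}(\delta^{-2/n})$ directly from~\eqref{eq:e_set}. So everything reduces to showing $n g^{-1}(\delta^{-2/n}) \le (\ln 4)\,n + 4\ln\delta^{-1}$, i.e. $g^{-1}(y) \le \ln 4 + \tfrac{4}{n}\ln\delta^{-1}$ where $y := \delta^{-2/n}$, equivalently $\tfrac{2}{n}\ln\delta^{-1} = \ln y$. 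Writing $u := g^{-1}(y)$, so that $g(u) = \tfrac{e^u}{1+u} = y$, the target becomes $u \le \ln 4 + 2\ln y$, i.e. I must show $g(u) = \tfrac{e^u}{1+u} \ge e^{(u-\ln 4)/2} = \tfrac{1}{2}e^{u/2}$ for the relevant range of $u$.

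So the key inequality to establish is $\tfrac{e^u}{1+u} \ge \tfrac{1}{2} e^{u/2}$, i.e. $2 e^{u/2} \ge 1+u$, for all $u \ge 0$ (and in fact for all $u$ that can arise; since $\delta \in (0,1)$ and $n \ge 1$ we have $y = \delta^{-2/n} > 1$, and $g$ is increasing with $g(0)=1$, so $u = g^{-1}(y) > 0$). First I would verify $h(u) := 2e^{u/2} - 1 - u \ge 0$ on $[0,\infty)$: at $u=0$, $h(0) = 2 - 1 - 0 = 1 > 0$, and $h'(u) = e^{u/2} - 1 \ge 0$ for $u \ge 0$, so $h$ is nondecreasing and hence nonnegative. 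This gives $2e^{u/2} \ge 1+u$, hence $g(u) \ge \tfrac{1}{2}e^{u/2}$, hence (applying $g^{-1}$, which is increasing) $u \le g^{-1}\!\big(\tfrac{1}{2}e^{u/2}\big)$ is not quite what I want — instead I should argue directly: since $g(u) = y$ and $g(u) \ge \tfrac12 e^{u/2}$, we get $y \ge \tfrac12 e^{u/2}$, so $e^{u/2} \le 2y$, so $u \le 2\ln(2y) = 2\ln 2 + 2\ln y = \ln 4 + 2\ln y$. Substituting back $\ln y = \tfrac{2}{n}\ln\delta^{-1}$ gives $u = g^{-1}(\delta^{-2/n}) \le \ln 4 + \tfrac{4}{n}\ln\delta^{-1}$, so $n g^{-1}(\delta^{-2/n}) \le n\ln 4 + 4\ln\delta^{-1}$, and adding the leading $n$ yields $\|x-\mu\|_{\Sigma^{-1}}^2 \le n + n\ln 4 + 4\ln\delta^{-1} = (1+\ln 4)n + 4\ln\delta^{-1}$, as claimed.

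I expect the main (minor) obstacle to be making the manipulation with $g^{-1}$ rigorous and clean: one must note $g \in \mathcal{K}_\infty$ restricted to $[0,\infty)$ with $g(0)=1$ and $g$ strictly increasing there, so $g^{-1}$ is well-defined and increasing on $[1,\infty)$, and that the argument $\delta^{-2/n} \ge 1$ lies in its domain — the Theorem's statement of $g \in \mathcal{K}_\infty$ already packages the needed monotonicity. Everything else is the one-line elementary inequality $2e^{u/2} \ge 1+u$ proved via a derivative check, so there is no serious difficulty; the proof is essentially a short estimate.
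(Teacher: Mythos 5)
Your proof is correct and follows essentially the same route as the paper's: both reduce to bounding $s=g^{-1}(\delta^{-2/n})$ via an elementary inequality between $1+s$ and $e^{s/2}$, arrive at $e^{s/2}\leq 2\delta^{-2/n}$, and take logarithms. The only difference is cosmetic — the paper uses $1+s\leq 2e^{s/2}-1$ and solves a quadratic in $e^{s/2}$, whereas you use the slightly weaker $1+s\leq 2e^{s/2}$ and skip the quadratic step, which is marginally cleaner.
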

The proof of this corollary is detailed in \Cref{sec:proof_complexity}.
Compared to the bound for distributions only with variance available in \cite{hewing2020recursively} which is $\mathcal{O}(n\delta^{-1})$, our bound is $\mathcal{O}\left(n + \ln\delta^{-1}\right)$ and thus less conservative 
for small $\delta$.
We also provide bounds for the moments of the norm of sub-Gaussian random vectors similar to~\cite[Proposition 2.5.2 (ii)]{vershyninHighdimensionalProbabilityIntroduction2018}:
\begin{lemma}[Bounds of moments]\label{lemma:mom_bound}
    Consider $X\sim \mathcal{SG}(\mu,\Sigma)$ with $\mu\in\mathbb{R}^n$ and $\Sigma\succ 0\in\mathbb{R}^{n\times n}$. For any $p\geq 1$, it holds that
    \begin{align*}
        \E\left[\|X-\mu\|_{\Sigma^{-1}}^p\right] \leq \underbrace{p2^{\frac{p - 1}{2}}\left(\frac{2e}{n}\right)^{\frac{n}{2}}\Gamma\left(\frac{n+p+1}{2}\right)}_{=:\mathcal{B}(p,n)}
    \end{align*}
    where $\Gamma$ is the Gamma function.
\end{lemma}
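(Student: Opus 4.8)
The plan is to combine the \emph{moment-from-tail} (layer-cake) identity with the elliptical tail bound of \Cref{thr:elliptical}. Set $Z := \|X-\mu\|_{\Sigma^{-1}}$, which is a well-defined nonnegative random variable since $\Sigma\succ0$. For any $p\ge1$ one has $\E[Z^p] = \int_0^\infty p\,t^{p-1}\,\mathrm{Pr}\{Z\ge t\}\,dt$ (substitute $s=t^p$ in $\E[Z^p]=\int_0^\infty\mathrm{Pr}\{Z^p\ge s\}\,ds$). The tail estimate $\mathrm{Pr}\{Z\ge\tau\}\le (e/n)^{n/2}\tau^n e^{-\tau^2/2}$ from \Cref{thr:elliptical} is only valid for $\tau>\sqrt n$, so the first step is to split the integral at $\sqrt n$: on $[0,\sqrt n]$ use the trivial bound $\mathrm{Pr}\{Z\ge t\}\le1$, which contributes exactly $\int_0^{\sqrt n}p\,t^{p-1}\,dt=n^{p/2}$; on $[\sqrt n,\infty)$ insert the tail bound, giving $p(e/n)^{n/2}\int_{\sqrt n}^\infty t^{n+p-1}e^{-t^2/2}\,dt$.

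Next I would evaluate the remaining integral via the substitution $u=t^2/2$, which turns $\int_{\sqrt n}^\infty t^{n+p-1}e^{-t^2/2}\,dt$ into $2^{(n+p-2)/2}\int_{n/2}^\infty u^{(n+p)/2-1}e^{-u}\,du$, an upper incomplete Gamma integral. To match the stated right-hand side, note that $\mathcal{B}(p,n)$ itself equals $p(e/n)^{n/2}\int_0^\infty t^{n+p}e^{-t^2/2}\,dt = p(e/n)^{n/2}2^{(n+p-1)/2}\Gamma(\tfrac{n+p+1}{2})$, applying the same substitution over the full half-line. Thus the claim reduces to the scalar inequality $n^{p/2}+p(e/n)^{n/2}\int_{\sqrt n}^\infty t^{n+p-1}e^{-t^2/2}\,dt \le p(e/n)^{n/2}\int_0^\infty t^{n+p}e^{-t^2/2}\,dt$, equivalently $n^{p/2}\le p(e/n)^{n/2}\bigl(\int_0^{\sqrt n}t^{n+p}e^{-t^2/2}\,dt+\int_{\sqrt n}^\infty (t-1)t^{n+p-1}e^{-t^2/2}\,dt\bigr)$, where the last integrand is nonnegative because $t\ge\sqrt n\ge1$ on that range. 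This final bound I would verify by elementary estimates on the two integrals, e.g.\ $e^{-t^2/2}\ge e^{-n/2}$ on $[0,\sqrt n]$ and $t-1\ge t(1-1/\sqrt n)$ on $[\sqrt n,\infty)$, or equivalently by monotonicity/log-convexity properties of the Gamma function applied to the incomplete-Gamma form.

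The main obstacle is exactly this last comparison. The split at $\sqrt n$ is forced by the validity range of \Cref{thr:elliptical}, and—unlike the scalar case in~\cite[Prop.\ 2.5.2]{vershyninHighdimensionalProbabilityIntroduction2018}, where the tail bound holds for all $t\ge0$—the low-order contribution $n^{p/2}$ from $[0,\sqrt n]$ is not negligible relative to the clean closed form; in particular one cannot simply replace the incomplete Gamma by the complete Gamma, since that over-relaxes the estimate for small $n$. Making the bookkeeping land precisely on $\mathcal{B}(p,n)$, rather than on $\mathcal{B}(p,n)$ plus a leftover additive term, is where the real work lies; everything else is a routine substitution plus the direct invocation of \Cref{thr:elliptical}.
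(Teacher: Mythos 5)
Your proposal follows the same skeleton as the paper's proof: the layer-cake identity, the elliptical tail bound~\labelcref{ineq:tail} from \Cref{thr:elliptical}, and the substitution $u=t^2/2$ to produce the Gamma function. Where you differ is that the paper applies~\labelcref{ineq:tail} over the entire half-line and lands exactly on $\mathcal{B}(p,n)$, whereas you correctly observe that~\labelcref{ineq:tail} is only stated (and only true) for $\tau>\sqrt{n}$ — its right-hand side vanishes at $\tau=0$ while the probability there is $1$ — and therefore split the integral at $\sqrt{n}$, picking up the additional term $n^{p/2}$. You have put your finger on a genuine subtlety that the paper's one-line invocation of~\labelcref{ineq:tail} passes over.

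The problem is that your proposal does not then finish. You reduce the lemma to the scalar comparison
\begin{align*}
n^{p/2}\;\le\;p\left(\tfrac{e}{n}\right)^{\frac n2}\left(\int_0^{\sqrt n}t^{n+p}e^{-t^2/2}\,dt+\int_{\sqrt n}^\infty (t-1)\,t^{n+p-1}e^{-t^2/2}\,dt\right),
\end{align*}
which you yourself identify as ``where the real work lies,'' and then leave it to unspecified ``elementary estimates.'' The specific estimates you suggest do not suffice: for $n=1$ the bound $t-1\ge t(1-1/\sqrt n)$ degenerates to $t-1\ge 0$ and contributes nothing, and $e^{-t^2/2}\ge e^{-n/2}$ on $[0,\sqrt n]$ only yields $p\,n^{(p+1)/2}/(n+p+1)$, which for $n=p=1$ equals $1/3<1=n^{p/2}$. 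The target inequality is in fact very tight there: for $n=p=1$ your upper bound evaluates to $1+\sqrt{e}\,e^{-1/2}=2$ against $\mathcal{B}(1,1)=\sqrt{2e}\,\Gamma(3/2)\approx 2.07$, a margin of roughly $3\%$ that crude pointwise bounds on the integrands cannot recover. So the key step is neither proved nor provable by the route you sketch; some genuine argument must be supplied — for instance, integrating the pre-optimization bound~\labelcref{ineq:tail_m} for a fixed $m>0$ over all of $[0,\infty)$ (at the cost of a different constant), or a careful monotonicity argument in $n$ and $p$ for the displayed comparison.
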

\begin{proof}
    Similar to~\cite[Proposition 2.5.2 (ii)]{vershyninHighdimensionalProbabilityIntroduction2018}, we have:
    \begin{align*}
        &\E\left[\|X-\mu\|_{\Sigma^{-1}}^p\right] = \int_0^\infty \mathrm{Pr}\{\|X-\mu\|^p_{\Sigma^{-1}} \geq u\} du \\
        &\stackrel{u=t^p}{=} \int_0^\infty \mathrm{Pr}\{\|X-\mu\|_{\Sigma^{-1}} \geq t\} pt^{p-1}dt \\
        & \stackrel{Equ.~\labelcref{ineq:tail}}{\leq}\left(\frac{e}{n}\right)^{\frac{n}{2}}\int_0^\infty p t^{n+p-1}\exp\left({-\frac{t^2}{2}}\right) dt \\
        & \stackrel{\tau=\frac{t^2}{2}}{=}p2^{\frac{p-1}{2}}\left(\frac{2e}{n}\right)^{\frac{n}{2}}\underbrace{\int_0^\infty  \tau^{\frac{n+p-1}{2}}e^{-\tau} dt }_{\Gamma\left(\frac{n+p+1}{2}\right)}\qedhere.
    \end{align*}
\end{proof}
\Cref{lemma:mom_bound} will be useful for analyzing \revision{the stability of} MPC later.
Both \Cref{lemma:half_space} and \Cref{thr:elliptical} yield probabilistic reachable sets that can be leveraged in the stochastic MPC scheme.
\Cref{lemma:half_space} is ideal if~\labelcref{eq:cons} is a single half-space constraint and it can also be applied for polytope chance constraints.
\Cref{thr:elliptical} is capable of handling general constraints.

\revision{\section{Sub-Gaussian stochastic MPC}}
\label{sec:IF-SMPC-KF}
In this section, we address Problem \labelcref{eq:soc_problem} by extending the indirect output-feedback MPC framework \cite{muntwiler2023lqg} from Gaussian to sub-Gaussian noise.
\revision{Indirect output-feedback MPC~\cite{muntwiler2023lqg} ensures satisfaction of chance constraints by quantifying the error between the the true trajectory and a nominal trajectory, and solving a nominal MPC problem with tightened constraints.}
\subsection{State estimator and tracking controller}
As in~\cite{muntwiler2023lqg}, \revision{we use a nominal state $z$ and implement a dynamic output-feedback to keep the estimated state $\hat{x}$ close to the real state $x$ and the nominal state $z$ using:}
\begin{subequations}\label{eq:track_est}
    \begin{align}
        z_{t+1} &= Az_t + Bv_t \label{eq:track}\\
        \hat{x}_{t+1} &= A\hat{x}_{t} + Bu_t + L\left(y_{t+1} - C(A\hat{x}_{t} + Bu_t)\right)\label{eq:est}\\
        u_t &= K(\hat{x}_t - z_t) + v_t
    \end{align}
    \label{eq:close_loop}
\end{subequations}
where $z_0=\mu_0$, and $v_t$ is the nominal input. 
The observer gain $L$ and the feedback $K$ are designed offline, e.g., using linear–quadratic–Gaussian \revision{control law~\cite{kwakernaak1972linear}}.

\subsection{Uncertainty propagation}
The error $e_t:=[\hat{x}_t - x_t;x_t - z_t]\in \mathbb{R}^{2n_x}$ consisting of estimation error and tracking error satisfies
\begin{align}
    \label{eq:err_dyn}
    e_{t+1} &= A^e e_t + B^e_1 w_t + B^e_2 \epsilon_t,\\
    A^e & :=\begin{bmatrix}
        A-LCA & 0 \\
        -BK & A + BK
    \end{bmatrix}, \nonumber\\ 
    B^e_1  &:= \begin{bmatrix}
        I - LC \\
        I  
    \end{bmatrix},\, 
    B^e_2 := \begin{bmatrix}
        -L \\
        0  
    \end{bmatrix},\nonumber
\end{align}
with $A^e$ Schur-stable by designing $K,L$ properly.
By denoting the matrix variance proxy of $e_t$ as $\Sigma_t$, we can propagate it through time based on \Cref{thr:propagation}:
\begin{align}
    \Sigma_{t+1} = A^e\Sigma_t{A^e}^\top + \sigma_w^2B^e_1{B^e_1}^\top + \sigma_\epsilon^2 B^e_2{B^e_2}^\top,
    \label{eq:prop}
\end{align} 
where $\Sigma_0=\sigma_0^2I$.

\subsection{Probabilistic reachable sets}
\revision{In this section, we derive the PRS of System~(\ref{eq:close_loop}).
\begin{lemma}
    Consider the closed-loop system~(\ref{eq:close_loop}),~(\ref{eq:dynamics}). Let~\Cref{assump:subGaussian} holds. Define $K^e := \left[\begin{array}{cc}
 0 &  I \\
 K &  K
\end{array}\right]$, and $\bar{\xi}_t:=(z_t, v_t)$.
For any nominal input $v_t$ that depends causally on $w_{0:t-1},\epsilon_{0:t-1}$, the resulting
 state-input trajectory $\xi_t:=(x_t,u_t)$ satisfies $\mathrm{Pr}\{\xi_t\in \bar{\xi}_t\oplus\mathcal{E}_t\}\geq 1 - \delta$ for any of the following sets $\mathcal{E}_t$ defined from \Cref{lemma:half_space},~\Cref{thr:elliptical} and \Cref{eq:set_subspace}:
\begin{align*}
        \mathcal{E}_t &:=  \mathcal{E}^\mathrm{h}(0, K^e\Sigma_t{K^e}^\top, \delta, h), \\
        \mathcal{E}_t &:= \mathcal{E}^\mathrm{e}(0, K^e\Sigma_t{K^e}^\top, \delta, n_x+n_u), \\
        \mathcal{E}_t &:= \mathcal{E}^\mathrm{e}_\mathrm{H}(H, 0, K^e\Sigma_t{K^e}^\top, \delta, n_c),
    \end{align*}
    with $H\in \mathbb{R}^{n_c\times n_x+n_u}, n_c< n_x+u_u$.
    \label{lemma:PRS}
\end{lemma}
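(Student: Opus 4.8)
The plan is to reduce \Cref{lemma:PRS} to two facts: an exact algebraic identity expressing the state--input deviation from the nominal pair as a fixed linear image of the error vector $e_t$ of~\labelcref{eq:err_dyn}, namely $\xi_t-\bar\xi_t=K^e e_t$; and the claim that $e_t\sim\mathcal{SG}(0,\Sigma_t)$ with the matrix variance proxy $\Sigma_t$ generated by the recursion~\labelcref{eq:prop}. Granting both, \Cref{thr:propagation}a applied to the linear map $K^e$ gives $\xi_t-\bar\xi_t=K^e e_t\sim\mathcal{SG}\!\left(0,\,K^e\Sigma_t{K^e}^\top\right)$, and then each of \Cref{lemma:half_space}, \Cref{thr:elliptical}, and the subspace construction~\labelcref{eq:set_subspace}, applied to this centered sub-Gaussian vector, yields $\mathrm{Pr}\{\xi_t-\bar\xi_t\in\mathcal{E}_t\}\geq 1-\delta$ for the corresponding choice of $\mathcal{E}_t$, which is precisely $\mathrm{Pr}\{\xi_t\in\bar\xi_t\oplus\mathcal{E}_t\}\geq 1-\delta$.

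For the identity, I substitute $u_t=K(\hat x_t-z_t)+v_t$ from~\labelcref{eq:close_loop}, so that $u_t-v_t=K(\hat x_t-z_t)=K(\hat x_t-x_t)+K(x_t-z_t)$, while $x_t-z_t$ is left as is. Writing $e_t=[\hat x_t-x_t;\,x_t-z_t]$, both blocks of $\xi_t-\bar\xi_t=[x_t-z_t;\,u_t-v_t]$ are linear in the two blocks of $e_t$, and reading off the coefficients gives exactly $\xi_t-\bar\xi_t=K^e e_t$ with $K^e$ as defined in the statement. This step uses only the controller structure in~\labelcref{eq:close_loop}, not the noise model.

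For $e_t\sim\mathcal{SG}(0,\Sigma_t)$ I induct on $t$. The base case uses \Cref{assump:subGaussian} together with \Cref{lemma:equ} to turn the scalar proxy $\sigma_0$ of $x_0$ into the matrix proxy $\sigma_0^2 I$, and then propagates it through the constant maps that define $e_0$ from $x_0$, $z_0=\mu_0$ and the initial estimate $\hat x_0$ via \Cref{thr:propagation}a, giving $e_0\sim\mathcal{SG}(0,\Sigma_0)$. For the inductive step I use the error recursion $e_{t+1}=A^e e_t+B^e_1 w_t+B^e_2\epsilon_t$ from~\labelcref{eq:err_dyn}, which is \emph{autonomous} (independent of $v_t$): by the inductive hypothesis and \Cref{thr:propagation}a, $A^e e_t\sim\mathcal{SG}(0,A^e\Sigma_t{A^e}^\top)$; the new-noise term is, conditionally on all quantities determining $e_t$, a linear image of the stacked vector $[w_t;\epsilon_t]$, which by \Cref{assump:subGaussian} is conditionally sub-Gaussian with block-diagonal proxy $\mathrm{diag}(\sigma_w^2 I,\sigma_\epsilon^2 I)$; hence \Cref{thr:propagation}a gives $\mathbb{P}_{B^e_1 w_t+B^e_2\epsilon_t\mid e_t}\in\mathcal{SG}(0,\sigma_w^2 B^e_1{B^e_1}^\top+\sigma_\epsilon^2 B^e_2{B^e_2}^\top)$. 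Applying \Cref{thr:propagation}b to the sum then produces $e_{t+1}\sim\mathcal{SG}(0,\Sigma_{t+1})$ with $\Sigma_{t+1}$ given precisely by~\labelcref{eq:prop}.

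I expect the main obstacle to be the bookkeeping in the inductive step: checking that $e_t$ is measurable with respect to, and that $(w_t,\epsilon_t)$ is conditionally sub-Gaussian given, a $\sigma$-algebra contained in the one used in \Cref{assump:subGaussian}. This is where the hypothesis that $v_t$ depends causally on $w_{0:t-1},\epsilon_{0:t-1}$ enters --- it ensures $u_{0:t-1}$ adds nothing to the conditioning beyond the past noise and that there is no circular dependence (indeed the error dynamics do not involve $v_t$ at all) --- and one must also justify that stacking $w_t$ and $\epsilon_t$ into a single conditionally sub-Gaussian vector with the block-diagonal proxy is legitimate under the joint conditioning. Once this is settled, everything else is a direct invocation of the already-proven propagation, half-space, and elliptical (and cylindrical) results.
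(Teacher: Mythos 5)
Your proposal is correct and takes essentially the same route as the paper's (very terse) proof: the algebraic identity $\xi_t-\bar\xi_t=K^e e_t$, the fact that $e_t\sim\mathcal{SG}(0,\Sigma_t)$ obtained from the error dynamics~\labelcref{eq:err_dyn} and the propagation rule~\labelcref{eq:prop} via \Cref{thr:propagation}, and then a direct invocation of \Cref{lemma:half_space}, \Cref{thr:elliptical} and~\labelcref{eq:set_subspace}. The only difference is that you spell out the induction and the conditioning bookkeeping that the paper treats as already established in its uncertainty-propagation subsection; the joint-stacking concern you flag is most cleanly resolved by applying \Cref{thr:propagation}b sequentially (first the process noise, then the measurement noise, each conditioned on everything preceding it), which is exactly what \Cref{assump:subGaussian} licenses.
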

\begin{proof}
   Since $\xi_t - \bar{\xi}_t = K^e e_t$, $\xi_t - \bar{\xi}_t$ is sub-Gaussian with variance proxy $K^e\Sigma_t{K^e}^\top$.
    The result then follows
 from \Cref{lemma:half_space},~\Cref{thr:elliptical}, and~\Cref{eq:set_subspace}. 
\end{proof}
Then, the tightened constraints
    \begin{align*}
        (z_t,v_t) \in (\mathcal{X}\times\mathcal{U}) \ominus \mathcal{E}_t
   \end{align*}
 ensure the satisfaction of the chance constraints~(\ref{eq:cons}).}

\subsection{Model predictive control problem formulation}
Following \cite{muntwiler2023lqg}, the MPC problem at each time step $t$ with horizon $H$ is
\begin{subequations}\label{prob:mpc}
    \begin{align}
        \min_{v_{0:H-1|t}} & \ell_f(\bar{x}_{H|t}) + \sum_{i=0}^{H-1} \ell\left(\bar{x}_{i|t}, v_{i|t} + K(\bar{x}_{i|t} - z_{i|t})\right) \label{eq:cost}\\ 
        \mathrm{s.t.}~ & \forall \,i\in\{0,...,H-1\}:\\
        & z_{i+1|t} = Az_{i|t} + Bv_{i|t}, \label{eq:z_prop}\\
        & \bar{x}_{i+1|t} = A\bar{x}_{i|t} + BK(\bar{x}_{i|t} - z_{i|t}) + Bv_{i|t}, \label{eq:x_prop}\\
        &(z_{i|t},v_{i|t})\in (\mathcal{X}\times\mathcal{U}) \ominus \mathcal{E}_{t+i}, \label{eq:mpc_cons}\\
        & z_{H|t}\in \mathcal{Z}_f, \\
        &\bar{x}_{0|t}=\hat{x}_t,\label{eq:x_init}\\
        &z_{0|t}=z_t,\label{eq:z_init}
    \end{align}
\end{subequations}
where $z_{i|t},\bar{x}_{i|t}$ denote the nominal and certainty equivalent prediction of the states predicted $i$ steps in the future.
The optimal nominal inputs at time step $t$ are denoted by $v^*_{0:H|t}$.
Problem \labelcref{prob:mpc} minimize the cost of the prediction conditioned on the estimated state, while constraints are enforced through a nominal initialization with the offline computed PRS $\mathcal{E}_{t:t+H-1}$.
It is a convex quadratic program if $\ell,\ell_f$ are quadratic functions and the constraints are polytopic.
We design the terminal set $\mathcal{Z}_f$ and terminal cost $\ell_f$ such that they satisfy the terminal invariance property:
\begin{assumption}[Terminal set and cost \cite{muntwiler2023lqg}]
\label{ass:terminal}
    The terminal set $\mathcal{Z}_f$ and terminal cost $\ell_f$ satisfy for all $z\in \mathcal{Z}_f$ and all $x\in \mathbb{R}^n$: 
    \begin{enumerate}[a.]
        \item (Positive invariance) $(A + BK)z\in \mathcal{Z}_f$;
        \item (Constraints satisfaction) \\ $(z,Kz) \in (\mathcal{X}\times \mathcal{U})\ominus \mathcal{E}_t, t\in\mathbb{N}$,
        \item (Lyapunov) $\ell_f((A+BK)x)\leq \ell_f(x)-\ell(x,Kx)$.
    \end{enumerate}
\end{assumption}
Here $\mathcal{Z}_f$ can be designed as the maximal positively invariant set of $\{z~|~(z,Kz)\in (\mathcal{X}\times \mathcal{U})\ominus \cup_{t=0}^\infty\mathcal{E}_t\}$.

The resulting closed-loop system is given by:
\begin{align}
    v_t = v^*_{0|t},~\labelcref{eq:track_est} \label{eq:policy}
\end{align}
In order to provide closed-loop stability, we also consider the following regularity conditions:
\begin{assumption}[Regularity conditions]\label{ass:quadratic}
    The cost is given by $\ell(x,u)=\|x\|_Q^2 + \|u\|_R^2,~\ell_f(x)=\|x\|_P^2$ with $Q, R, P\succ 0$. 
\end{assumption}
The matrix $P$ can be computed using the LQR.
The closed properties of the controller \labelcref{eq:policy} are summarized in the following theorem:
\begin{theorem}[Closed-loop Properties]\label{thr:mpc}
Let Assumptions \labelcref{assump:subGaussian,ass:terminal} hold and suppose that Problem~\labelcref{prob:mpc} is feasible at $t=0$. 
Then, the Problem \labelcref{prob:mpc} is recursively feasible for all $t\in\mathbb{N}$, and the closed-loop system~\labelcref{eq:sys}, \labelcref{eq:policy} satisfies the chance constraints \labelcref{eq:cons} for all $t\in\mathbb{N}$.
Furthermore, with \Cref{ass:quadratic}, the asymptotic average cost satisfies:
\begin{align*}
    \lim_{T\rightarrow \infty}\frac{1}{T}\sum_{t=0}^{T-1}\E\left[\ell(x_t, u_t)\right] \leq \kappa_w\left(\sigma_w\right) + \kappa_\epsilon\left(\sigma_\epsilon\right),
\end{align*}
where $\kappa_w$ and $\kappa_\epsilon$ are $\mathcal{K}_\infty$ functions.
\end{theorem}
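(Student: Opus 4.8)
The plan is to follow the template of indirect output-feedback MPC in~\cite{muntwiler2023lqg}, replacing the Gaussian-based PRS with the sub-Gaussian PRS from \Cref{lemma:PRS} and using the moment bound \Cref{lemma:mom_bound} for the performance estimate. I would organize the proof into three parts: recursive feasibility, chance constraint satisfaction, and the average-cost bound.

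For recursive feasibility, the key observation is that the error dynamics~\labelcref{eq:err_dyn} and the matrix variance proxy recursion~\labelcref{eq:prop} are \emph{independent of the applied nominal inputs}, so the PRS sequence $\mathcal{E}_t$ (hence the tightened constraints~\labelcref{eq:mpc_cons}) is precomputed offline and does not change with the realized trajectory. Thus the standard shift-the-solution-and-append-the-terminal-controller argument goes through verbatim: given an optimal plan $v^*_{0:H-1|t}$ at time $t$, I would construct the candidate plan at $t+1$ by dropping the first input, shifting, and appending $v_{H-1|t+1} = K(\bar{x}_{H|t} - z_{H|t})$; the nominal dynamics~\labelcref{eq:z_prop},~\labelcref{eq:x_prop} together with $\bar{x}_{0|t+1}=\hat{x}_{t+1}$ ensure $\bar{x}_{i|t+1}-z_{i|t+1}$ evolves under $A+BK$ for $i\geq 1$ (since the certainty-equivalent error is deterministic given the initialization), and \Cref{ass:terminal}a,b guarantees the appended step stays in $\mathcal{Z}_f$ and satisfies~\labelcref{eq:mpc_cons}. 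The one subtlety to check is that the re-initialization $\bar{x}_{0|t+1}=\hat{x}_{t+1}\neq\bar{x}_{1|t}$ does not break constraint satisfaction --- this is exactly handled because the constraint~\labelcref{eq:mpc_cons} is on $(z_{i|t},v_{i|t})$, which is unaffected by the re-initialization of $\bar{x}$, as in~\cite{muntwiler2023lqg}.

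For chance constraint satisfaction, I would invoke \Cref{lemma:PRS} directly: the realized $v_t = v^*_{0|t}$ depends causally on $y_{0:t}$, hence on $w_{0:t-1},\epsilon_{0:t-1}$ (and $x_0$), so the causality hypothesis of \Cref{lemma:PRS} holds, giving $\mathrm{Pr}\{\xi_t \in \bar\xi_t \oplus \mathcal{E}_t\}\geq 1-\delta$; since feasibility of~\labelcref{prob:mpc} forces $\bar\xi_t=(z_t,v_t)\in(\mathcal{X}\times\mathcal{U})\ominus\mathcal{E}_t$, the Minkowski-sum/erosion identity yields $\mathrm{Pr}\{\xi_t\in\mathcal{X}\times\mathcal{U}\}\geq 1-\delta$, which is~\labelcref{eq:cons}.

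The average-cost bound is where the sub-Gaussian machinery does real work and is the main obstacle. The idea is to use \Cref{ass:terminal}c (the Lyapunov condition) to show that the optimal value function $V_t$ of~\labelcref{prob:mpc} satisfies a dissipation inequality of the form $\E[V_{t+1}\mid \mathcal{F}_t] \leq V_t - \ell(x_t,u_t) + \rho(e_t, w_t, \epsilon_t)$, where the remainder $\rho$ collects cross-terms arising because the true MPC feedback uses $\hat x_t$ rather than $x_t$, and because the candidate value at $t+1$ is evaluated along the shifted nominal trajectory whereas the realized cost uses the true state --- these differences are affine in $e_t$, $w_t$, $\epsilon_t$, so after taking conditional expectations and using \Cref{assump:subGaussian} (conditional zero mean) the linear terms vanish and only quadratic terms in $e_t,w_t,\epsilon_t$ survive. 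Summing the dissipation inequality from $0$ to $T-1$, telescoping, dividing by $T$, using $V_T\geq 0$ and $V_0<\infty$, gives $\frac1T\sum_t \E[\ell(x_t,u_t)] \leq \frac{V_0}{T} + \frac1T\sum_t \E[\rho_t]$; the crux is bounding $\limsup_T \frac1T\sum_t\E[\rho_t]$ by something of the form $\kappa_w(\sigma_w)+\kappa_\epsilon(\sigma_\epsilon)$. Here I would (i) bound $\E[\|e_t\|^2]$ via \Cref{lemma:mom_bound} with $p=2$ applied to $X=e_t$, which gives $\E[\|e_t\|_{\Sigma_t^{-1}}^2]\leq \mathcal B(2,2n_x)$ and hence, using $\|\Sigma_t\|_2$ bounded uniformly (since $A^e$ is Schur-stable, the recursion~\labelcref{eq:prop} has a bounded steady state $\Sigma_\infty$ that scales affinely in $\sigma_w^2,\sigma_\epsilon^2$), a bound $\E[\|e_t\|^2]\leq c_1\sigma_w^2 + c_2\sigma_\epsilon^2$ for constants depending only on the system matrices; (ii) bound $\E[\|w_t\|^2]\leq c_3\sigma_w^2$ and $\E[\|\epsilon_t\|^2]\leq c_4\sigma_\epsilon^2$ again via \Cref{lemma:mom_bound}; (iii) collect the quadratic remainder terms, which by Cauchy--Schwarz/Young are bounded by sums of such second moments, producing the desired $\mathcal K_\infty$ dependence on $\sigma_w,\sigma_\epsilon$. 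The delicate bookkeeping is making the remainder $\rho_t$ genuinely have vanishing conditional mean --- one must be careful that the ``candidate'' cost at $t+1$ is a valid upper bound for $V_{t+1}$ (by suboptimality) and that all the cross-terms between the deterministic nominal/certainty-equivalent quantities and the zero-mean increments $w_t,\epsilon_t$ (and the conditionally-centered part of $e_t$) integrate to zero; this mirrors the Gaussian analysis in~\cite{muntwiler2023lqg} but must be re-derived using only the moment generating function bound~\labelcref{eq:sub_gau_def} and \Cref{lemma:mom_bound} in place of closed-form Gaussian moments.
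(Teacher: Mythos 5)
Your overall architecture (shift-and-append for recursive feasibility, \Cref{lemma:PRS} for the chance constraints, \Cref{lemma:mom_bound} with $p=2$ for the second moments of the errors) matches the paper's proof. Two issues remain, one minor and one substantive.

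Minor: the appended terminal input must be the \emph{nominal} terminal controller $v_{H-1|t+1}=Kz^*_{H|t}$, not $K(\bar{x}_{H|t}-z_{H|t})$. With your choice, $z_{H|t+1}=Az^*_{H|t}+BK(\bar{x}^*_{H|t}-z^*_{H|t})$ need not lie in $\mathcal{Z}_f$, and \Cref{ass:terminal}b (which certifies $(z,Kz)\in(\mathcal{X}\times\mathcal{U})\ominus\mathcal{E}_t$) no longer applies, so the invariance argument breaks as written. With $v=Kz^*_{H|t}$ the total appended input is $Kz^*_{H|t}+K(\bar{x}^*_{H|t}-z^*_{H|t})=K\bar{x}^*_{H|t}$, which is what the Lyapunov condition in \Cref{ass:terminal}c needs for the cost part as well.

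Substantive: your performance argument hinges on the linear cross-terms in the dissipation inequality vanishing in conditional expectation. This fails for the estimation-error contribution. The re-initialization perturbation is $\bar{e}_t=LCA\hat{e}_t+LCw_t+L\epsilon_t$ with $\hat{e}_t=\hat{x}_t-x_t$: while $w_t,\epsilon_t$ are conditionally zero-mean by \Cref{assump:subGaussian}, $\hat{e}_t$ is \emph{not} zero-mean conditioned on the information available to the controller for general sub-Gaussian noise (unlike the Gaussian/Kalman-filter case, where $\E[\hat{e}_t\mid y_{0:t}]=0$ makes the cross-terms in~\cite{muntwiler2023lqg} disappear); and conditioned on the full noise history it is deterministic, so its cross-term with the $\mathcal{F}_t$-measurable optimal solution simply remains. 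The paper avoids this entirely by using the multiplicative relaxation of~\cite{kohler2024predictive}: for every $m>0$ the \emph{pathwise} inequality
\begin{align*}
\frac{1}{1+m}\,\mathcal{J}_H(t+1)\;\leq\;\mathcal{J}_H(t)-\ell(\hat{x}_t,u_t)+\frac{c_\mathcal{J}}{m}\|\bar{e}_t\|^2
\end{align*}
holds by Young's inequality on the quadratic cost, absorbing \emph{all} cross-terms into the quadratic remainder without any zero-mean argument; the descent is stated in $\ell(\hat{x}_t,u_t)$ and related to $\ell(x_t,u_t)$ via $\tfrac12\ell(x_t,u_t)\leq\ell(\hat{x}_t,u_t)+\|\hat{e}_t\|_Q^2$, and the telescoping is then done with the geometric $(1+m)^{-1}$ weighting before letting $m$ be small. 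After that, your steps (i)--(iii) — bounding $\mathrm{tr}(\Sigma_\infty)$ affinely in $\sigma_w^2,\sigma_\epsilon^2$ via Schur stability of $A^e$ and applying \Cref{lemma:mom_bound} with $p=2$ — are exactly what the paper does. You should replace the ``cross-terms integrate to zero'' step with this relaxation (or otherwise bound, rather than cancel, the $\hat{e}_t$-cross-terms); as proposed, that step does not go through.
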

\begin{proof}
\ifautomatica
\revision{
    The proof follows the arguments of~\cite[Thm.~2]{kohler2024predictive} and ~\cite[Thm.~1]{muntwiler2023lqg}.
    For recursive feasibility and chance constraints satisfaction, our proof directly follows~\cite[Thm.~1]{muntwiler2023lqg}, given valid PRS computed from~\Cref{lemma:PRS}.
    For stability, we first follow~\cite[Thm.~2]{kohler2024predictive} to upper bound the asymptotic average cost with $\mathcal{K}_\infty$ function of the second moments of the noises $\epsilon_t$ and $w_t$.
    Then we bound the second moments by functions of $\sigma_\epsilon,\sigma_w$ with~\Cref{lemma:mom_bound}.
    A more detailed proof is included in~\cite{ao2025stochastic}. 
    }
\else
The proof is detailed in Appendix~\ref{sec:proof_mpc}.
\fi
\end{proof}
In \Cref{thr:mpc}, the closed-loop constraint satisfaction property provides safety guarantees, while the asymptotic average cost bound implies a low average cost if the variance prox of the noise is small.
Compared to \cite{muntwiler2023lqg}, \Cref{thr:mpc} 
additionally address non-identical noises under sub-Gaussian assumptions.

\section{Numerical experiments}
\label{sec:exp}

In this section, we assess the performance of our uncertainty propagation and MPC methods. 
For our experiments, we empirically demonstrate that
\begin{enumerate}
    \item The PRS computed by our method satisfies the user-specified containment probability, including heteroscedastic noise settings.
    \item Our PRS is less conservative than the robust and distributional robust baselines.
    \item Our MPC approach achieves a smaller cost than existing Distributionally Robust (DR) MPC approaches, while providing probabilistic guarantees on constraint satisfaction.
\end{enumerate}
All experiments are conducted with 
the probability threshold $1-\delta=95\%$. 
\revision{The implementation details are available at \url{https://github.com/ToolManChang/Sub_Gaussian_MPC} \ifautomatica and~\cite{ao2025stochastic}. \else.\fi}

\subsection{Environments}
\label{sec:env}
We demonstrate the performance of our approach on \revision{two} different test-beds. 

\textbf{Mass-Spring-Damper (MSD)} 
\revision{The MSD system has the form 
\begin{align*}
    x_{t+1} = \begin{bmatrix}
        1 & \Delta t\\
        -\frac{k\Delta t}{m} & 1 - \frac{b\Delta t}{m}
    \end{bmatrix} x_t + \begin{bmatrix}
        0 \\
        \frac{\Delta t}{m}
    \end{bmatrix} u_t + w_t,
\end{align*}
where $x\in\mathbb{R}^2$ is the stacked position and velocity, $m$ is the mass, $b$ is the damping coefficient, $k$ is the spring constant, and $\Delta t$ is the time step of the discretized system.
We choose $\Delta t=0.1$, $m = 2$, $k=1$ and $b=1$. The observation model is simply $y_t = x_t + \epsilon_t$. The noise $w_t, \epsilon_t$ are sampled randomly from Student-t distributions.
We truncate the distributions to ensure the noise remains sub-Gaussian. 
Heteroscedastic noises are introduced by setting the noise to be 5 times greater when $x[0]>0.2$.}
\revision{The target state is defined as $x^* = [0.5, 0.0]^\top$, and our cost $\ell(x_t, u_t)$ is defined as $\|x_t - x^*\|^2$.
The constraint is defined as $x_t[0] \leq 0.5,\forall\,\, t\in\mathbb{N}$, where $[\cdot]$ denotes the index of dimension.
}

\textbf{Surgical Planning (SP)}
\revision{This environment, taken from~\cite{ao2025saferplan}, provides a simplified model for intraoperative pedicle screw placement, a common step for robotic spine surgery.}
\revision{We define the state as the relative pose between the surgical tool and the target pose $x_t:=[p^d_t - p, q^d_t]^\top$, where $p^d_t \in \mathbb{R}^3$ and $q^d_t \in \mathbb{R}^2$ are position and sphere coordinates of the direction (deviated from the goal) of the tool, $p$ is the target position.
The unknown target position $p$ (on a real patient anatomy) is estimated as $\Tilde{p}$ by registration between intraoperative anatomy reconstruction and the preoperative image.
The observation is defined as the per-step estimated state $y:= [p^d_t - \Tilde{p}_t, q^d_t]^\top$.
The resulting dynamics is:
\begin{align*}
    x_{t+1} &= x_t + u_t \Delta t + w_t \\
    y_t &= x_t + \epsilon_t
\end{align*}
where $\Delta t=0.075$, $\epsilon_t$ is the per-step state estimation errors following unknown distributions.
The goal state is $x^*=[0.12, 0.0, 0.0, 0.0, 0.0]^\top$.
The cost function $\ell(x_t, u_t)$ is defined as $\|x_t - x^*\|^2 + 0.001\|u_t\|^2$.
The safety constraint sets for all $0\leq t\leq T$ are described by:
\begin{align}
   \|x_t[1:2]\| &\leq \frac{1}{5}\sqrt{\exp(-2500 x_t[0]^2 - 5)+ 0.0004}, \nonumber\\
   x[0] &\leq 0.12 \label{eq:sp_x_cons}
\end{align}
where a funnel-like narrow feasible region \revision{(Figure~\ref{fig:teaser}, light blue)} is constructed to simplify the safety constraint of the real surgery.
}

Since the analytical sub-Gaussian variance proxies are not available for our test noise \revision{and registration error} distributions, we use 5000 samples to calibrate their variance proxies, akin to \cite{arbel2020strict}. 
Specifically, from \Cref{def:sub_gau_mean}, it holds that:
\revision{\begin{align*}
    \sigma^2 = \max_{\lambda\in \mathbb{R}^n} \frac{2\ln \E[e^{\lambda^\top(s - \mu)}]}{\|\lambda\|^2} \approx \max_{\lambda\in \mathbb{R}^n} \frac{2\ln \frac{1
    }{N}\mathop{\textstyle\sum}\limits_{i=1}^N e^{\lambda^\top(s_i - \hat{\mu})}}{\|\lambda\|^2}
\end{align*}}
where $s_{1:N}$ are data samples and $\hat{\mu}$ is the sample mean.
\revision{For MSD with heterosedastic noise, we calibrate the maximum variance proxy over all noise distributions.}

\begin{figure}
    \centering
    \includegraphics[width=0.4\textwidth]{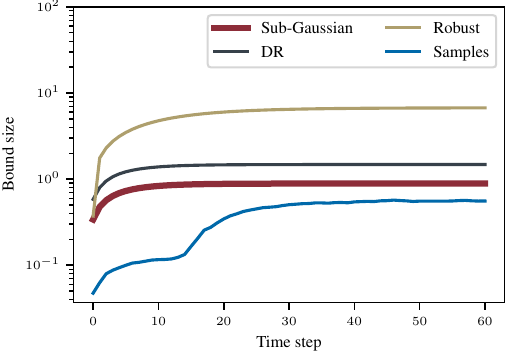}
    \caption{Comparison of $95\%$ confidence bound sizes quantified by different methods in the mass-spring-damper environment with \revision{truncated Student-t} noise.
    \revision{Blue} lines represent quantiles from test samples.
    \revision{All} approaches compute the global maximum confidence bound to address the heteroscedastic noise.
    \revision{The confidence bound from our sub-Gaussian approach is greater than the quantiles but less conservative than robust and distributionally robust approaches. }
    }
    \label{fig:up}
\end{figure}

\begin{figure*}[htbp]
\centering
\begin{subfigure}[t]{0.48\textwidth}
    \includegraphics[width=\textwidth]{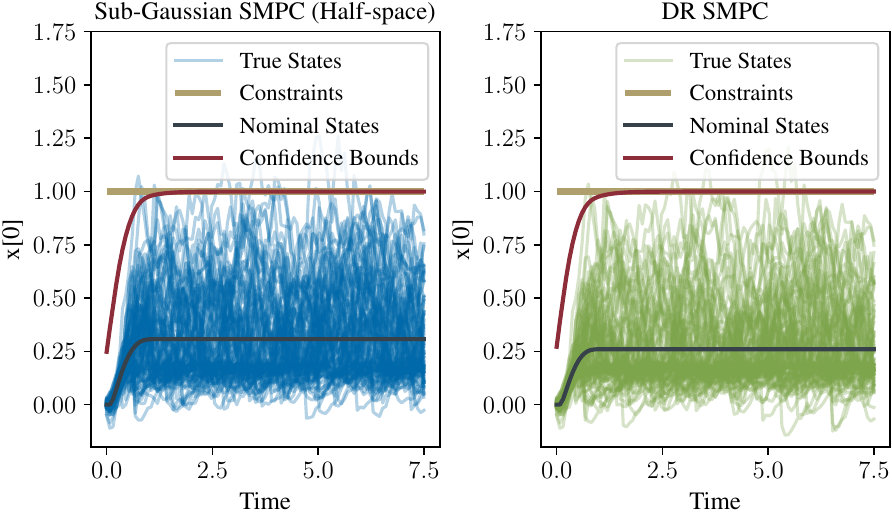}
    \label{fig:mpc_msd}
\end{subfigure}
\begin{subfigure}[t]{0.48\textwidth}
    \includegraphics[width=\textwidth]{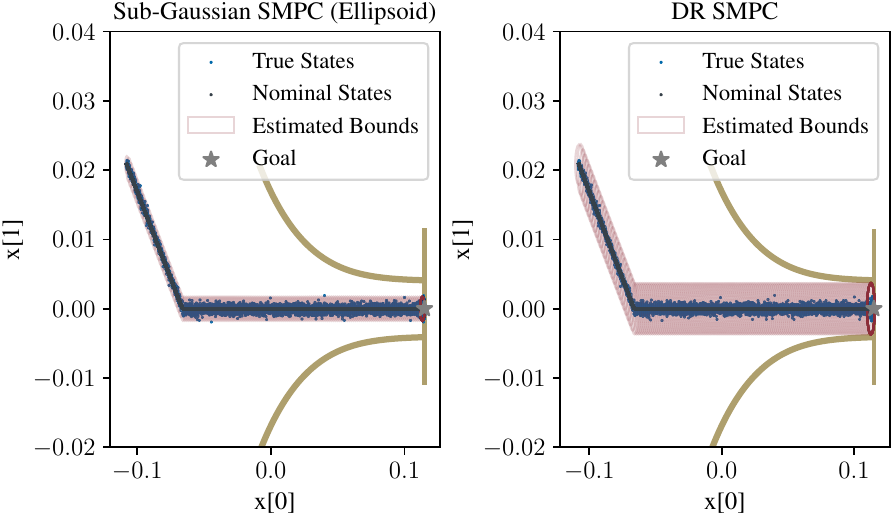}
    \label{fig:mpc_vl_sp}
\end{subfigure}
\caption{
Plans from sub-Gaussian \revision{and DR} MPC approaches in 100 trials from \revision{(left)} MSD \revision{and (right)} SP environments, respectively.
For MSD, the $x$ and $y$ axes are time and the first state, respectively.
For SP, they correspond to the first 2 dimensions of the states.
The confidence levels of displayed examples are set at $95\%$.
The yellow lines represent the boundary constraints.
In all problems, the proposed approach satisfies the safety-critical constraints with the chosen probability $95\%$.
}
\end{figure*}

\subsection{Baselines}
\label{sec:baselines}
\revision{
We consider two baselines for comparison: Robust and DR approaches.
\ifautomatica
The robust baseline takes the maximum bounds from the noise samples, and computes the \revision{robust reachable sets} with set propagation through linear dynamics~\cite{mayne2006robust,richards2005robust}.
The DR approach obtains the PRS with the empirical covariance matrix and Chebyshev inequality~\cite{hewing2020recursively,farina2015approach}.
Similar to the proposed method, both baselines are distribution-agnostic and can address non-identical noises.
\else
\newline\textbf{Robust approach~\cite{mayne2006robust,richards2005robust}}
\label{sec:robust}
The noise terms $\epsilon_t$ and $w_t$ are bounded within sets $\mathcal{E}$ and $\mathcal{W}$ respectively, which are calibrated as the maximum bound from samples.
The uncertainty propagation in \Cref{eq:err_dyn} is handled through set propagation:
    $\mathcal{E}_{t+1} = A\mathcal{E}_t\oplus B_1^e \mathcal{W}\oplus B_2^e \mathcal{E}$,
where $\oplus$ is the Minkowski sum.\\\newline
\textbf{Distributionally Robust (DR) approach~\cite{hewing2020recursively,farina2015approach}} 
Instead of Gaussian distributions, many works consider formulations that treat \emph{all distributions} with the given covariance matrix.
Specifically, with the same covariance matrix propagated as Stochastic-Gaussian approaches, the bounds are obtained with Chebyshev inequality $\mathcal{E}_t = \{e_t \,\,|\,\, e_t^\top{\Sigma_t^{-1}}e_t \leq \frac{n_c}{\delta}\}$.
In this case, the resulting bounds are distribution-agnostic, which comes at the price of increased conservatism.
but at the price of being loose and much larger than the quantile function of $\chi^2(2n_c)$ distribution.
\fi
}

\subsection{Uncertainty propagation}

\label{sec:exp_up}

In this section, we study the performance of our approach compared to the baseline uncertainty propagation methods.
To this end, we use different approaches to predict probabilistic reachable sets $\mathcal{E}_t$ (\Cref{sec:baselines}) conditioned on the same action sequence $u_t$ under random noises. 
We then generate $N=10^5$ testing trajectories for MSD and compute the errors between nominal and true states as $e^i_t, t=0,1, ...,\,i=1, 2,...,N.$
For the SP environment, we only generate $N=100$ testing trajectories due to the complexity of the simulation.
We compare the minimum containment probability $\min_t \mathrm{Pr}\{e_t\in\mathcal{E}_t\}$, which is empirically estimated using $N$ samples.
Moreover, we also compare confidence bound sizes ($\sup_{e\in \mathcal{E}_t}a^\top e$) with baselines and quantiles from samples, where $a$ is the normal of 
the closest constraint boundary of the environment.

\begin{table}[t]
\centering
\caption{Comparison of minimum containment probability over time between different approaches with $95\%$ confidence.
}

\begin{tabular}{cccc}
\hline
  & \textbf{Sub-Gau}     & Robust & DR  \\ \hline
     MSD         & 99.47  &  100.00 & 100.00 \\ 
    SP           & 99  & 100 & 100 \\ 
\hline
\end{tabular}
\label{tab:up}
\end{table}

\revision{The results in \Cref{tab:up} demonstrate that the confidence bounds from sub-Gaussian propagation satisfy the predefined confidence level for heteroscedastic noise in \revision{MSD}.} 
 \Cref{fig:up} illustrates that the bound size from our approach is always greater than the quantile bounds from samples and smaller than the robust and DR bounds, highlighting reliability and reduced conservatism of the uncertain prediction.

\subsection{Stochastic MPC}
\label{sec:exp_mpc}

In this section, we evaluate the effectiveness of our approach for output-feedback stochastic MPC. 
This approach is compared against the distributional robust MPC \cite{hewing2020recursively}.
Robust MPC is not compared with other approaches since it fails to find feasible solutions for all testing environments, which is due to the significantly larger PRS shown in~\Cref{fig:up}. 
In the MSD environment, we utilize the half-space confidence bounds (\Cref{lemma:half_space}) for all stochastic MPC approaches.
Elliptical bounds are used for the SP environment.
The evaluation metrics include the total cost and maximum constraint violation ratio through time, measured over 100 trajectories.

The results in \Cref{tab:mpc} show the capability of our approach to satisfy the chance constraints while being less conservative than the distributional robust approaches.
In \Cref{tab:mpc}, our satisfaction of the chance constraints are all greater than $95\%$, the desired value.
Our average costs are smaller than those of the variance-based distributional robust approach.
Finally, \Cref{fig:mpc_msd,fig:mpc_vl_sp} show the confidence sets from our sub-Gaussian approach, which are reasonably small for finding feasible solutions to the considered problems, including SP with vision-based state estimation.

\begin{table}[!hbtp]
\footnotesize
\centering
\caption{\revision{Performance for MPC approaches in MSD and SP environments.}
\revision{MCP denotes the maximum constraint violation probability, which should be smaller than $\delta=5\%$.}
}
\begin{tabular}{ccccc}
\hline
Envs  & \multicolumn{2}{c}{MSD (Student-t)}  & \multicolumn{2}{c}{SP (Bounded Laplace)} \\ \cmidrule(lr){2-3}\cmidrule(lr){4-5} 
Metrics    &    MCP {[}\%{]} & Cost    & MCP {[}\%{]}     & Cost       \\ \hline
\textbf{Sub-Gau}     & 2            & 59.7                  & 1              & 24.985  \\
DR                & 1            & 64.7       & 0              & 24.986    \\ \hline
\end{tabular}
\label{tab:mpc}
\end{table}

\section{Conclusion}

In this work, we proposed a guaranteed stochastic uncertainty propagation framework based on an extended sub-Gaussian definition.
We derived sub-Gaussian characterization and confidence bounds for the state distribution resulting from sub-Gaussian noise.
We validated our theoretical contributions through sufficient numerical evaluation of our method, demonstrating its capability to guarantee chance constraint satisfaction while being less conservative than robust and distributional robust
approaches.
Interesting future directions include 
extending the stochastic MPC to nonlinear systems and leveraging the sub-Gaussian characterization in machine learning.

\bibliographystyle{plain}        
\bibliography{refs}

\appendix

\section{APPENDIX}
\subsection{Proof of \Cref{lemma:equ}}
\label{sec:proof_equ}

\begin{proof}
    \textit{Definition~\ref{def:classic} to Definition~\ref{def:sub_gau_mean}}:
    Let us assume that $\sigma$ is the variance proxy of $X$ with \Cref{def:classic}.
    This means for all $\|b\|=1$, the scalar random variable $b^\top X$ is $\sigma$-sub-Gaussian.
    Therefore, $\forall \lambda\in \mathbb{R}^n$, $\frac{\lambda^\top (X-\mu)}{\|\lambda\|}$ is $\sigma$-sub-Gaussian.
    Then by \Cref{def:classic}, we have $\forall\lambda\in \mathbb{R}^n$:
\begin{align*}
    &\E\left[\exp{\left(\lambda^\top (X - \mu)\right)}\right] \\
    =& \E\left[\exp{\left(\|\lambda\|\cdot\dfrac{\lambda^\top (X-\mu)}{\|\lambda\|}\right)}\right] 
    \leq \exp{\left(\dfrac{\|\lambda\|^2\sigma^2}{2}\right)}.
\end{align*}
Hence, $X$ is sub-Gaussian (\Cref{def:sub_gau_mean}) with variance proxy $\Sigma=\sigma^2 I$.

\textit{Definition~\ref{def:sub_gau_mean} to Definition~\ref{def:classic}}:
    According to Definition~\ref{def:sub_gau_mean}, there is a variance proxy $\Sigma$ such that for $\forall\,\, \lambda\in \mathbb{R}^n$:
    \begin{align*}
        \E\left[\exp{\left(\lambda^T(X - \mu)\right)}\right] &\leq \exp{\left(\frac{\|\lambda\|^2_\Sigma}{2}\right)}
    \end{align*}
    Hence, for any $c\in \mathbb{R}$ and $\lambda\in\mathbb{R}^n$, we have:
     \begin{align*}
        & \E\left[\exp{\left(c(\lambda^TX - \lambda^T\mu)\right)}\right] 
        = \E\left[\exp{\left(c\lambda^T(X - \mu)\right)}\right] \\
        \leq &\exp{\left(\frac{\|c\lambda\|^2_\Sigma}{2}\right)}
         = \exp{\left(\frac{c^2\|\lambda\|^2_\Sigma }{2}\right)}. 
    \end{align*}
    This means for $\forall\,\, \lambda\in \mathbb{R}^n$, $\lambda^TX$ is sub Gaussian with variance proxy $\|\lambda\|_\Sigma$, which means the random vector $X$ is sub-Gaussian with variance proxy $\sigma^2=\revision{\|\Sigma\|_2}$ by Definition~\ref{def:classic}.
\end{proof}

\subsection{Proof of Theorem~\ref{thr:elliptical}}
\label{sec:proof_tail}

\begin{proof}
This proof follows the steps in \cite[Lemma 2]{chowdhury2017kernelized}. 
Without loss of generality, suppose $E[X]=\mu=0$.
According to Definition~\ref{def:sub_gau_mean}, we have for $\forall \lambda \in \mathbb{R}^n$:
\begin{align*}
   \E\left[\exp{\left(\lambda^TX - \dfrac{\|\lambda\|^2_{\Sigma}}{2}\right)}\right] \leq 1.
\end{align*}
Therefore, for $\lambda$ sampled from any Gaussian distribution $\lambda \sim \mathcal{N}(0, S^{-1})$, we also have:
\begin{align*}
    &\int_\lambda \E_X\left[\exp\left(\lambda^TX- \frac{1}{2}\|\lambda\|^2_{\Sigma}\right)\right] p(\lambda) d\lambda \leq 1.
\end{align*}
Now we compute the left-hand side:
\begin{align*}
    &\int_\lambda \E_X\left[\exp\left(\lambda^TX - \frac{1}{2}\|\lambda\|^2_{\Sigma}\right)\right] p(\lambda) d\lambda \\
    & =  \frac{1}{\sqrt{(2\pi)^n\det{(S^{-1})}}}\E_X\left[\int_\lambda\exp\left(\lambda^TX - \frac{1}{2}\|\lambda\|^2_{\Sigma + S} \right) d\lambda  \right]\\
    & = \frac{1}{\sqrt{(2\pi)^n\det{(S^{-1})}}} \E_X\left[\exp\left(\frac{1}{2}\|X\|^2_{(\Sigma + S)^{-1}}\right)\right.  \\
    & \left.\times \int_\lambda \exp\left( - \frac{1}{2}\|\lambda - (\Sigma + S)^{-1}X\|^2_{\Sigma + S} \right) d\lambda \right]\\
    & =  \sqrt{\frac{\det{S}}{\det{(\Sigma + S)}} }\E\left[\exp\left(\frac{\|X\|^2_{(\Sigma + S)^{-1}}}{2}\right)\right].
\end{align*}
Therefore for any $S\succ 0$, we have:
\begin{align*}
    \E\left[\exp\left(\frac{\|X\|^2_{(\Sigma + S)^{-1}}}{2}\right)\right] \leq \sqrt{\frac{\det{(\Sigma + S)}}{\det{(S)}} }.
\end{align*}
Now let us assign $S = m\Sigma, m>0$, then we obtain:
\begin{align*}
    \E\left[\exp \left(\frac{\|X\|^2_{\Sigma^{-1}}}{2 + 2m}\right)\right] 
    \leq \sqrt{\frac{\det{(1+m)\Sigma}}{\det{(m\Sigma)}} } = \left(\frac{1 + m}{m}\right)^\frac{n}{2}. \label{ineq:moment_2}
\end{align*}
Finally, we get for $\forall m>0$ and $t\geq 0$:
\begin{equation}
    \begin{split}
    &\,\,\mathrm{Pr}\{\|X\|_{\Sigma^{-1}} \geq \tau \} \\
    & = \mathrm{Pr}\left\{\exp{\left(\frac{\|X\|^2_{\Sigma^{-1}}}{2 + 2m}\right)} \geq \exp{\left(\frac{\tau^2}{2 + 2m}\right)} \right\} \\
    & \leq \E\left[\exp{\left(\frac{\|X\|^2_{\Sigma^{-1}}}{2 + 2m}\right)} \right] \cdot \exp{\left(-\frac{\tau^2}{2 + 2m}\right)} \\
    & \leq \left(\frac{1 + m}{m}\right)^\frac{n}{2} \exp{\left(-\frac{\tau^2}{2 + 2m}\right)},
\end{split}\label{ineq:tail_m}
\end{equation}
where the second last inequality is the Chernoff inequality.
Now we minimize this tail bound over $m$:
\begin{align*}
    &\frac{d}{dm}\left(\frac{1 + m}{m}\right)^\frac{n}{2} \exp{\left(-\frac{\tau^2}{2(1 + m)}\right)} = 0 \\
    \Rightarrow\,\,& \left(-\frac{n}{2m^2} + \frac{\tau^2}{2(1 + m)m}\right) = 0 
    \,\,\Rightarrow\,\, m^* = \frac{n}{\tau^2 - n}, 
\end{align*}
where $\tau^2 - n > 0$ by assumption.
Plugging $m^*$ to Inequality \labelcref{ineq:tail_m} yields:
\begin{align*}
    \mathrm{Pr}\{\|X\|_{\Sigma^{-1}} \geq \tau \} \leq \left(\frac{\tau^2}{n}\right)^{\frac{n}{2}} \exp{\left(\dfrac{n - \tau^2}{2}\right)},
\end{align*}
which can be rearranged as \Cref{ineq:tail}.
Abbreviating $s:=\frac{\tau^2}{n}-1$ and assigning the tail probability to $\delta$, we have:
\begin{align}
    \left(\dfrac{\exp(s)}{1+s}\right)^{\frac{n}{2}} = \frac{1}{\delta} ~\Rightarrow~  \dfrac{\exp(s)}{1+s}=\delta^{-\frac{2}{n}}. \label{eq:value_s}
\end{align}
Therefore, $s = g^{-1}\left(\delta^{-\frac{2}{n}}\right)$ and the confidence bound is solved as $\tau^2 = n + ng^{-1}\left(\delta^{-\frac{2}{n}}\right)$ as in \Cref{eq:e_set}.
\end{proof}
\subsection{Proof of \Cref{lemma:complexity}}
\label{sec:proof_complexity}
\begin{proof}
    Denote $s=g^{-1}(\delta^{-\frac{n}{2}})$ and $\tau^2=n(s+1)$.  
    Since $1 + s \leq 2\exp(\frac{s}{2})-1$ for $s\geq 0$,
    we have:
    \begin{align*}
        &\frac{\exp(s)}{2\exp(\frac{s}{2})-1} \leq \frac{\exp(s)}{1 + s}=\delta^{-\frac{2}{n}}\\
        \Leftrightarrow~&\exp(s) - 2\delta^{-\frac{2}{n}}\exp\left(\frac{s}{2}\right) + \delta^{-\frac{2}{n}} \leq 0.
    \end{align*}
    Since the left-hand side is a quadratic function of $\exp(\frac{s}{2})$, it holds:
    \begin{align*}
        \exp(\frac{s}{2}) &\leq \delta^{-\frac{2}{n}} + \sqrt{\delta^{-\frac{4}{n}} - \delta^{-\frac{2}{n}}} \\
        \Rightarrow~{\tau}^2 &\leq n + 2n\ln\left(\delta^{-\frac{2}{n}} + \sqrt{\delta^{-\frac{4}{n}} - \delta^{-\frac{2}{n}}}\right) \\
        & \leq n + 2n\ln\left(2\delta^{-\frac{2}{n}}\right)\\
        & = (1 + 2\ln 2)n + 4\ln \delta^{-1}.\qedhere
    \end{align*}
\end{proof}
\ifautomatica
\else
    \subsection{Proof of \Cref{thr:mpc}}
    \label{sec:proof_mpc}
    \begin{proof}
    The proof follows the arguments of~\cite[Thm.~2]{kohler2024predictive} and ~\cite[Thm.~1]{muntwiler2023lqg}.\\
        \textbf{Recursive feasibility:}
    Given the optimal input $v^*_{0:H-1|t}$ at some time $t$, we assign $v^*_{H|t}:=Kz^*_{H|t}$. 
    For time $t+1$, we consider the candidate inputs $v_{0:H-1|t+1}=v^*_{1:H|t}$, which yields the nominal states $z_{0:H|t+1}=\{z^*_{1:H|t},(A + BK)z^*_{H|t}\}$ using \Cref{eq:policy,eq:z_prop}.
    This is a feasible candidate solution to Problem~\labelcref{prob:mpc} using \Cref{ass:terminal}, $(z_{i|t+1},v_{i|t+1})=(z^*_{i+1|t},v^*_{i+1|t})\in (\mathcal{X}\times \mathcal{U})\ominus \mathcal{E}_{i+t+1}, i \in \{0,1,\dots,H-1\}$, and $z_{H|t+1}=(A+BK)z_{H|t}\in \mathcal{Z}_f$.\\
    \textbf{Chance constraints:} 
    Even though the error $\xi_t$ is not necessarily independent of the MPC input $v_t$, Theorem~\ref{thr:propagation} ensures that $\xi_t\sim\mathcal{SG}\left(0,\Sigma_t^\xi\right)$ and the design of $\mathcal{E}_t$ (Thm.~\ref{thr:elliptical}/Lemma~\ref{lemma:half_space}) ensures $\mathrm{Pr}\{\xi_t\in\mathcal{E}_t\}\geq 1-\delta$, $\forall t\in\mathbb{N}$. 
    Thus, closed-loop constraints satisfaction follows with$(x_t,u_t)=(z_t,v_t) + \xi_t \in (z_t, v_t) \oplus \mathcal{E}_t \subseteq \mathcal{X}\times \mathcal{U}$ from the constraint \labelcref{eq:mpc_cons}. \\
    \textbf{Performance guarantees:}
    We denote
    $u_{i|t}^\star=v^*_{i|t}+K(\bar{x}^\star_{i|t}-z^\star_{i|t}), i=0,\dots H$,
    which satisfies $u_{H|t}^\star=K\bar{x}_{H|t}^\star$.
    The optimal certainty equivalent states $\bar{x}^*_{0:H+1|t}$ are determined by \labelcref{eq:x_init} and $\bar{x}^*_{i+1|t}=A\bar{x}^*_{i|t}+Bu_{i|t}^\star, i=0,\dots H$.
    From \Cref{eq:est}, \labelcref{eq:x_init} and \labelcref{eq:sys}, we have:
    \begin{align}
        &\bar{x}_{0|t+1} = \hat{x}_{t+1}= \bar{x}^*_{1|t} + L(y_{t+1} - C\bar{x}^*_{1|t}) \nonumber\\
        =~ &\bar{x}^*_{1|t} + L\left(C(Ax_t + Bu_t+w_t)+\epsilon_t - CA\hat{x}_t - CBu_t \right)\nonumber\\ 
        =~ &\bar{x}^*_{1|t} + LCA \hat{e}_t + LCw_t + L\epsilon_t =:\bar{x}^*_{1|t} +  \bar{e}_t. \label{eq:err_bar}
    \end{align}
    Using $x_t=\hat{x}_t+\hat{e}_t$, the quadratic stage cost satisfies
    \begin{align}
        \frac{1}{2} \ell(x_t, u_t) \leq \ell(\hat{x}_t, u_t) + \|\hat{e}_t\|^2_Q.\label{ineq:loss_est_true}
    \end{align}
    We denote $\mathcal{J}_H(t)$ as the optimal objective function of Problem \labelcref{prob:mpc} at time $t$.
    Following the arguments in~\cite[Thm 2, proof (i)]{kohler2024predictive} and Equation~\labelcref{eq:err_bar}, the quadratic cost and Lipschitz continuous dynamics ensure
    \begin{align}
        \frac{1}{1 + m} \mathcal{J}_H(t+1) \leq\mathcal{J}_H(t) - \ell(\hat{x}_t, u_t) + \frac{c_\mathcal{J}}{m}\|\bar{e}_t\|^2 \label{ineq:J}
    \end{align}
    for all $m>0$ with a uniform constant $c_\mathcal{J}>0$.
    We now consider the upper bound for $\mathrm{tr}(\Sigma_\infty)$. 
    The variance propagation~\labelcref{eq:prop} and $A^e$ Schur stable imply that:
    \begin{align}
        \mathrm{tr}(\Sigma_\infty) \leq c_1(\sigma_{\epsilon}^2 + \sigma_w^2)\label{ineq:bound_trace_1}
    \end{align}
    for some constant $c_1>0$.
    Furthermore, \Cref{thr:propagation} and  \labelcref{eq:err_bar} ensure $\bar{e}_t\sim\mathcal{SG}(0,\bar{\Sigma}_t)$ $\hat{e}_t\sim\mathcal{SG}(0,\hat{\Sigma}_t)$ with
    \begin{align*}
     \hat{\Sigma}_\infty=&[I;0]\Sigma_\infty[I;0]^\top,\\
     \bar{\Sigma}_\infty=&L(C(A\hat{\Sigma}_\infty A^\top + \sigma_w^2 I)C^\top + \sigma_\epsilon^2I)L^\top.
    \end{align*}
    This further implies:
    \begin{equation}
        \begin{split}
            \mathrm{tr}(\hat{\Sigma}_\infty) &\leq \mathrm{tr}(\Sigma_\infty) \\
            \mathrm{tr}(\bar{\Sigma}_\infty) &\leq c_2(\mathrm{tr}(\Sigma_\infty) + \sigma_\epsilon^2 + \sigma_w^2)\end{split}\label{ineq:bound_trace_2}
    \end{equation}
    for some constant $c_2>0$. Applying \Cref{lemma:mom_bound} with $p=2$ in combination with \labelcref{ineq:bound_trace_1} and \labelcref{ineq:bound_trace_2} implies:
    \begin{align}
        &\E[\|\bar{e}_t\|^2] \leq \mathrm{tr}(\bar{\Sigma}_t)\E[\|\bar{e}_t\|^2_{\bar{\Sigma}_t^{-1}}] \leq \mathcal{B}(2, n_x)\mathrm{tr}(\bar{\Sigma}_t) \nonumber\\
        &\E[\|\hat{e}_t\|^2_Q] \leq \mathcal{B}(2, n_x)\lambda_{\mathrm{max}}(Q)\mathrm{tr}(\hat{\Sigma}_t),
         \label{ineq:err_bound}
    \end{align}
    where we use $\lambda_{max}(\Sigma)\leq \mathrm{tr}(\Sigma)$. %
    Combining~\labelcref{ineq:loss_est_true,ineq:J,ineq:err_bound} yields
    \begin{align*}
    &\mathbb{E}_{\epsilon_t,w_t}\left[\dfrac{1}{1+m}J_H(t+1)-J_H(t)+\frac{1}{2}\ell(x_t,u_t)\right] \\
    \leq &\mathcal{B}(2, n_x)\left(\lambda_{\mathrm{max}}(Q)\mathrm{tr}(\hat{\Sigma}_t)+\frac{c_J}{m}\mathrm{tr}(\bar{\Sigma}_t)\right).   
    \end{align*}
    Finally, following~\cite[(iii), proof Thm.~2]{kohler2024predictive}, we choose $m>0$ sufficiently small to arrive at
    \begin{align*}
        &\quad\lim_{T\rightarrow\infty} \E_{\epsilon_{0:T},w_{0:T}}\left[\frac{1}{T}\sum_{t=0}^{T-1}\ell(x_t, u_t)\right]\\ &\leq \lim_{T\rightarrow\infty}\frac{1}{T}\sum_{t=0}^{T-1}\left(\kappa_1\left(\mathrm{tr}(\bar{\Sigma}_t)\right)+\kappa_2\left(\mathrm{tr}(\hat{\Sigma}_t)\right)\right) \\
        &= \kappa_1\left(\mathrm{tr}(\bar{\Sigma}_\infty)\right)+\kappa_2\left(\mathrm{tr}(\hat{\Sigma}_\infty)\right) \\
       &\stackrel{\labelcref{ineq:bound_trace_2},\labelcref{ineq:bound_trace_1}}{\leq}  \kappa_w(\sigma_w) + \kappa_\epsilon(\sigma_\epsilon). \qedhere
    \end{align*}
    
    \end{proof}
\fi

\end{document}